\newcommand{\resetCurThmBraces}{%
\gdef\curThmBraceOpen{(}%
\gdef\curThmBraceClose{)}}
\newcommand{\removeThmBraces}{%
\gdef\curThmBraceOpen{}%
\gdef\curThmBraceClose{}}
\patchcmd{\thmhead}{(#3)}{\curThmBraceOpen #3\curThmBraceClose }{}{}
\newcommand{\defaultshowkeysformat}[1]{%
\StrSubstitute{#1}{ }{\textvisiblespace}[\TEMP]%
\parbox[t]{\marginparwidth}{\raggedright\normalfont\small\ttfamily\(\{\){\color{red!50!black}\expandafter\seqsplit\expandafter{\TEMP}}\(\}\)}%
}
\renewcommand*\showkeyslabelformat[1]{%
\noexpandarg%
\defaultshowkeysformat{#1}%
}
\renewcommand\itemautorefname{Item}
\newcommand{\itemref}[2]{\autoref{#1}.\ref{#2}}
\newcommand{\xra}[1]{\xrightarrow{~#1~}}
\newcommand{\smooth}{smooth\xspace}
\DeclareMathOperator{\Sub}{\mathsf{Sub}}
\newcommand{\op}[1]{\operatorname{\mathsf{#1}}}
\newcommand{\id}{\op{id}}
\newcommand{\monoto}{\ensuremath{\rightarrowtail}}
\newcommand{\subto}{\ensuremath{\hookrightarrow}}
\newcommand{\cat}[1]{\mathscr{#1}}
\def\A{\cat A}
\def\C{\cat C}
\newcommand{\E}{\mathcal{E}}
\newcommand{\MM}{\mathcal{S}}
\newcommand{\Set}{\mathsf{Set}}
\newcommand{\Pos}{\mathsf{Pos}}
\newcommand{\Pfn}{\mathsf{Pfn}}
\newcommand{\Gra}{\mathsf{Gra}}
\newcommand{\KVec}{K\text{-}\mathsf{Vec}}
\newcommand{\DCPOb}{\mathsf{DCPO}_\bot}
\newcommand{\CMS}{\mathsf{CMS}}
\newcommand{\MS}{\mathsf{MS}}
\newcommand{\Ord}{\mathsf{Ord}}
\newcommand{\Pow}{\mathscr{P}}
\newcommand{\M}{\mathcal{M}}
\newcommand{\N}{\mathds{N}}
\newcommand{\Z}{\mathds{Z}}
\newcommand{\Coalg}{\mathop{\mathsf{Coalg}}}
\newcommand{\colim}{\mathop{\mathsf{colim}}}
\newcommand{\ter}{\tau}
\newcommand{\ini}{\iota}
\newcommand{\set}[1]{\{#1\}}
\newcommand{\Abar}{B}
\newcommand{\dbar}{d'}
\newcommand{\eps}{\varepsilon}
\newcommand{\opp}{\mathsf{op}}
\newcommand{\mhat}{\widehat{m}}
\newcommand{\ehat}{\widehat{e}}
\newcommand{\chat}{\widehat{c}}  
\renewcommand{\o}{\cdot}
\newcommand{\takeout}[1]{\empty}
\tikzset{shiftarr/.style={
        rounded corners,%
        to path={--([#1]\tikztostart.center)
                     -- ([#1]\tikztotarget.center) \tikztonodes
                     -- (\tikztotarget)},
}}
\theoremstyle{definition}
\newtheorem{defn}[theorem]{Definition} 
\newtheorem{rem}[theorem]{Remark} 
\newtheorem{assumption}[remark]{Assumption}
\title{Initial Algebras Without Iteration}
\authorrunning{J.~Ad\'amek S.~Milius, L.~S.~Moss}
\author{Ji\v{r}\'i Ad\'amek}%
  {Czech Technical University in Prague, Czech Republic and Technische
  Universität Braunschweig}
  {j.adamek@tu-braunschweig.de}%
  {}{Supported by the grant No.~19-0092S of the Czech Grant Agency}
\author{Stefan Milius}%
  {Friedrich-Alexander-Universität Erlangen-Nürnberg, Germany}%
  {stefan.milius@fau.de}%
  {https://orcid.org/0000-0002-2021-1644}
  {Supported by Deutsche Forschungsgemeinschaft (DFG) under project
    \mbox{MI~717/7-1} and as part of the Research and Training Group 2475 ``Cybercrime and Forensic Computing'' (393541319/GRK2475/1-2019)}
\author{Lawrence S.~Moss}%
  {Indiana University, Bloomington IN, USA}%
  {larry.moss@gmail.com}%
  {}{Supported by grant \#586136 from the Simons Foundation.}%
\keywords{Initial algebra, Pataraia's theorem, recursive coalgebra,
  initial-algebra chain}
\begin{document}
\FXRegisterAuthor{sm}{asm}{SM}
\FXRegisterAuthor{ja}{aja}{JA}
\FXRegisterAuthor{lm}{alm}{LM}

\maketitle

\begin{abstract}
  The Initial Algebra Theorem by Trnkov\'a et al.~states, under mild assumptions, that an endofunctor has an initial algebra provided it has a pre-fixed point. The proof crucially depends on transfinitely iterating the functor and in fact shows that, equivalently, the (transfinite) initial-algebra chain stops. We give a constructive proof of the Initial Algebra Theorem that avoids transfinite iteration of the functor. For a given pre-fixed point $A$ of the functor, it uses Pataraia's theorem to obtain the least fixed point of a monotone function on the partial order formed by all subobjects of $A$. Thanks to properties of recursive coalgebras, this least fixed point yields an initial algebra. We obtain new results on fixed points and initial algebras in categories enriched over directed-complete partial orders, again without iteration. Using transfinite iteration we equivalently obtain convergence of the initial-algebra chain as an equivalent condition, overall yielding a streamlined version of the original proof.
\end{abstract}

\section{Introduction}
\label{sec:intro}

Owing to the importance of initial algebras in theoretical computer
science, one naturally seeks results which give the existence of
initial algebras in the widest of settings.  We can distinguish two
different, but related ideas which are commonly used in such results.
By Lambek's Lemma, for every endofunctor $F\colon \A\to\A$, every
initial algebra $\alpha\colon FA \to A$ has a structure $\alpha$ which
is an isomorphism.  So one might hope to obtain an initial algebra
from a \emph{fixed point}, viz.~an $F$-algebra with isomorphic
structure. It is sometimes much easier to find a \emph{pre-fixed
  point}, an object~$A$ together with a monomorphism
$m\colon FA \monoto A$. The Initial Algebra Theorem by Trnkov\'a et
al.~\cite{takr} states that, with inevitable but mild assumptions, any
functor $F$ which preserves monomorphisms and has a pre-fixed point
also has an initial algebra.  The proof uses the second prominent idea
in the area: iteration, potentially into the transfinite.  Indeed,
transfinite iteration of $F$ seems to be an essential feature of the
proof.

The purpose of this paper is to prove the Initial Algebra Theorem in
as wide a setting as possible with no use of iteration whatsoever.
Moreover, the side conditions are mild: they apply, e.g.~to the
categories of complete metric spaces and directed-complete partial
orders with a least element (shortly, dcpo with bottom).  To situate
our method in a larger context, recall that some fixed point theorems
are proved with iteration, and some without.  On the iterative side,
we find Kleene's Theorem: continuous functions on $\omega$-cpos with a
least element have least fixed points obtained by iteration in
countably many steps; and Zermelo's Theorem: monotone functions on
chain-complete posets with a least element have least fixed points,
using a transfinite iteration.  On the non-iterative side, we have the
Knaster-Tarski Theorem: monotone functions on complete lattices have
both least and greatest fixed points, obtained by a direct definition
without iteration. A relatively new result is Pataraia's Theorem:
monotone functions on dcpos with bottom have a least fixed point. The
latter two theorems are ordinal-free and indeed constructive.

The initial algebra for a functor $F$ can often be constructed by
iterating the functor, starting with the initial object $0$ and
obtaining a transfinite chain $0\to F0 \to FF0\cdots$
(\autoref{D:chain}). The reason why fixed point theorems are useful
for the proof of the Initial Algebra Theorem is that in every category
$\A$, the collection $\Sub(A)$ of subobjects of a given object $A$ is
a partial order, and the iteration of $F$ can be reflected by a
particular monotone function $f\colon\Sub(A) \to \Sub(A)$ when
$\alpha\colon FA \monoto A$ is a pre-fixed point; it takes a subobject
$u\colon B \monoto A$ to $\alpha\cdot Fu$.  If $\Sub(A)$ is
sufficiently complete, then $f$ has a least fixed point, and we show
that this yields an initial algebra for $F$.

In order to make this step it is important for us that
joins in $\Sub(A)$ are given by colimits in $\A$.
Therefore Pataraia's Theorem is the best choice as a basis for the
move from the least fixed point of $f$ to the initial $F$-algebra. The
reasons are that (a)~it balances the weak assumption of monotonicity
on $f$ with the comparatively weak directed-completeness of the
subobject lattice; and (b)~its use yields an ordinal-free proof (in
contrast to using Zermelo's Theorem, for which~(a) is also the
case). In fact, we present many examples of categories where directed
joins of subobjects are given by colimits, while this is usually not
the case for arbitrary joins, rendering the Knaster-Tarski
Theorem a bad choice for us.

 We start our exposition in \autoref{S:fixed} with
a review of Pataraia's Theorem and also its non-constructive
precursor, Zermelo's Theorem which we use later in
\autoref{S:ini-chain}. The second ingredient for the proof of our main
result are recursive coalgebras, which we tersely review in
\autoref{S:recoalg}. We use the fact that a recursive coalgebra which
is a fixed point already is an initial algebra.  \autoref{S:smooth} discusses the
property that joins of subobjects are given by colimits. We make the
technical notion of \emph{smoothness} parametric in a class $\M$ of
monomorphisms (representing subobjects), and we
prove our results for a  smooth class $\M$. 

Our main result is the new proof of the Initial Algebra Theorem in \autoref{S:ini-thm}.
We apply it in \autoref{S:dcpo} to the category $\DCPOb$ of
dcpos with bottom.  The class of all
embeddings is smooth.  We derive a new result: if an endofunctor
preserves embeddings and has a fixed point, then it has an initial
algebra which coincides with the terminal coalgebra.

Finally, \autoref{S:ini-chain} rounds off our paper by providing the
original Initial Algebra Theorem, which features the
initial-algebra chain obtained by transfinite iteration.  Although our
proof has precisely the same mathematical content as the original one,
it is slightly streamlined in that it appeals to Zermelo's Theorem
rather than unfolding its proof.
\subparagraph*{Related work.}
Independently and at the same time, Pitts and
Steenkamp~\cite{pittsSteenkamp} have obtained a result on the
existence of initial algebras, which makes use of \emph{sized
  functors} and is formalizable in Agda. In effect, they show that a form of 
iteration using sized functors is sufficient to obtain initial algebras.  Our work,
while constructive, is not aimed at formalization, and, as previously
mentioned avoids iteration.%
\smnote{I vote for leaving this half sentence; we did \emph{not} say
  above that this is what differs in our work from Pitts and Steenkamp.}

\section{Preliminaries}
\label{S:prelim}

We assume that readers are familiar with standard notions from the
theory of algebras and coalgebras for an endofunctor~$F$. We denote an
initial algebra for~$F$, provided it exists, by
\[
  \ini\colon F(\mu F) \to \mu F.
\]
Recall that Lambek's Lemma~\cite{lambek} states that its structure
$\ini$ is an isomorphism. This means that $\mu F$ is a \emph{fixed
  point} of $F$, viz.~an object $A \cong FA$.

\subsection{Fixed Point Theorems}
\label{S:fixed}

In this subsection we present preliminaries on fixed point theorems
for ordered structures. The most well-known such results are, of
course, what is nowadays called Kleene's fixed point theorem and the
Knaster-Tarski fixed point theorem. The former is for $\omega$-cpos,
partial orders with joins of $\omega$-chains, with a least element
(\emph{bottom}, for short). Kleene's Theorem states that every
endofunction which is \emph{$\omega$-continuous}, that is preserving
joins of $\omega$-chains, on an $\omega$-cpo has a least fixed
point. The Knaster-Tarski Theorem~\cite{knaster,Tarski55}, makes
stronger assumptions on the poset but relaxes the condition on the
endofunction. In its most general form it states that a monotone
endofunction $f$ on a complete lattice $P$ has a least and greatest
fixed point. Moreover, the fixed points of $f$ form a complete lattice
again.

Here we are interested in fixed point theorems that still work for
arbitrary monotone functions but make do with weaker completeness
assumptions on the poset $P$. One such result pertains to
chain-complete posets.  It should be attributed to
Zermelo, since the mathematical content of the result appears in his
1904 paper~\cite{Zermelo04} proving the Wellordering Theorem.

An \emph{$i$-chain} in a poset $P$ for an ordinal number $i$ is a
sequence $(x_j)_{j < i}$ of elements of $P$ with $x_j \leq x_k$ for
all $j \leq k < i$. 
The poset $P$ is said to be \emph{chain-complete} if every $i$-chain
in it has a join. In particular, $P$ has a least element $\bot$ (take
$i = 0$).

Let $f\colon P\to P$ be a monotone map on the chain-complete poset
$P$. Then we can define an ordinal-indexed sequence $f^i(\bot)$ by the
following transfinite recursion:
\begin{equation}\label{eq-Zerp}
  f^0(\bot) = \bot, \quad f^{j+1}(\bot) = f(f^j(\bot)),
  \quad\text{and}\quad  
  f^j(\bot) = \bigvee_{i < j} f^i(\bot)
  \quad \mbox{for limit ordinals $j$}.
\end{equation}
It is easy to verify that this is a chain in $P$.
\begin{theorem}[Zermelo]\label{T:Zermelo}
  Let $P$ be a chain-complete poset. Every monotone map
  $f\colon P\to P$ has a least fixed point $\mu f$.  Moreover, for
  some ordinal $i$ we have
  \(
    \mu f =  f^i(\bot).
  \)
\end{theorem}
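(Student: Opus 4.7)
The plan is to use the ordinal-indexed sequence $(f^i(\bot))$ supplied by~\eqref{eq-Zerp}: verify it is weakly increasing, argue on cardinality grounds that it must stabilize, and then check that the stable value is the least fixed point of $f$.

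First, I would verify by transfinite induction that $f^j(\bot) \leq f^k(\bot)$ whenever $j \leq k$, which both justifies chain-completeness supplying the joins at limit stages and shows the sequence is genuinely a chain in $P$. The successor step uses monotonicity of $f$ together with the inductive comparison $f^k(\bot) \leq f^{k+1}(\bot)$, while at a limit ordinal $j$ the value $\bigvee_{i<j} f^i(\bot)$ is an upper bound of the earlier terms by construction.

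Next, I would argue that the sequence cannot be strictly increasing on all ordinals. Since $P$ is a set, Hartogs' lemma yields an ordinal $\kappa$ of cardinality strictly greater than $|P|$, so pigeonhole produces some $j < k \leq \kappa$ with $f^j(\bot) = f^k(\bot)$; combined with the monotonicity of the previous step, this collapses to $f^{j+1}(\bot) = f^j(\bot)$, i.e.~$f(f^j(\bot)) = f^j(\bot)$. Hence there is a least ordinal $i$ at which $f^i(\bot)$ is a fixed point of~$f$.

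Finally, to check that $f^i(\bot)$ is the \emph{least} fixed point, I would show by a further transfinite induction that $f^j(\bot) \leq p$ for every fixed point $p$ and every ordinal $j$: the base uses $\bot \leq p$; the successor step uses monotonicity of $f$ and $f(p) = p$; and the limit step uses that $p$ dominates all earlier approximants and hence their join. Specialising $j = i$ yields $\mu f := f^i(\bot) \leq p$, as required. The main obstacle is the set-theoretic stopping argument in the second step; it is the only genuinely non-constructive ingredient, and it is precisely the transfinite use of cardinality that the rest of the paper sets out to eliminate via Pataraia's theorem.
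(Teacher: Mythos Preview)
Your proposal is correct and follows essentially the same approach as the paper: transfinite iteration of $f$ from $\bot$, a Hartogs-style cardinality argument to force stabilization, and a transfinite induction showing that every fixed point dominates all iterates. The only cosmetic difference is that you first find $j<k$ with $f^j(\bot)=f^k(\bot)$ and then collapse to $f^j(\bot)=f^{j+1}(\bot)$ via monotonicity, whereas the paper argues directly that some $j$ with $f^j(\bot)=f^{j+1}(\bot)$ must exist (else $j\mapsto f^j(\bot)$ would inject an ordinal into $P$); also, the paper records the chain property of $(f^i(\bot))$ just before stating the theorem rather than inside the proof.
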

\begin{proof}
  Take $i$ to be any ordinal larger than $|P|$, the cardinality of the
  set $P$. For this $i$, there must be some $j < i$ such that
  $f^j(\bot) = f^{j+1}(\bot)$. Indeed, this follows from Hartogs'
  Lemma~\cite{Hartogs15}, stating that for every set $P$ there exists
  an ordinal $i$ such that there is no injection $i \monoto P$. Thus,
  $f^j(\bot)$ is a fixed point of $f$. Let $f(x) = x$. An easy
  transfinite induction shows that $f^i(\bot) \leq x$ for all
  $i$. Hence, $f^j(\bot)$ is the least fixed point of~$f$.
\end{proof}

There are also variations on \autoref{T:Zermelo}, such as the result
often called the Bourbaki-Witt Theorem~\cite{Bourbaki49,Witt51}; this
states that every inflationary endo-map on a chain-complete poset has
a fixed point above every element. (A map $f\colon P \to P$ is
\emph{inflationary}, if $x \leq f(x)$ for every $x \in P$.)%

\autoref{T:Zermelo} is not constructive.  Our proof
relied on Hartogs' Lemma, which in turn builds on the standard theory
of ordinals. That theory uses classical reasoning. A related point:
some prominent results depending on ordinals are known to be
unavailable in constructive set theory (see~\cite{BauerLumsdane13}).
For many of the end results, there is an alternative, Pataraia's
Theorem~\cite{Pataraia97}, proved without iteration and without
ordinals (see~\autoref{T:Pataraia}). This result is at the heart of
this paper.  It uses dcpos in lieu of chain-complete posets.
 
Pataraia sadly never published his result in written form.  But it has
appeared e.g.~in work by Escard\'o~\cite{Escardo03},
Goubault-Larrecq~\cite{GL13}, Bauer and
Lumsdane~\cite{BauerLumsdane13} based on a preprint by
Dacar~\cite{Dacar09}, and Taylor~\cite{Taylor21}.  We present a proof
based on Martin's presentation~\cite{Martin13}.

First recall that a \emph{directed subset} of a poset $P$ is a
non-empty subset $D \subseteq P$ such that every finite subset of $D$
has an upper bound in $D$. The poset $P$ is called a \emph{dcpo with
  bottom} if it has a least element and every directed subset $D \subseteq P$
has a join. Note that by Markowsky's Theorem~\cite{Markowsky}, a poset
is chain-complete iff it is a dcpo.%
\smnote{I think we should make this remark because it explains nicely,
  that Pataraia and Zermelo have essentially the same content and
  usefulness. Btw, Markowsky's theorem is \emph{not} Iwamura's Lemma
  but the proof of the former uses the latter (see our book) and Jean
  Goubault-Larrec's blog post
  \url{https://projects.lsv.ens-cachan.fr/topology/?page_id=563}}
\begin{rem}\label{R:funorder}\mbox{ }
  \begin{enumerate}
  \item Observe that
 the set of all maps on a poset $P$ form a poset using the
    point-wise order: $f \leq g$ if for every $x \in P$ we have
    $f(x) \leq g(x)$.

  \item Hence, $f\colon P \to P$ is inflationary iff  $\id_P
    \leq f$, where $\id_P$ is the identity function on $P$.
    
  \item\label{R:funorder:3} Function composition is \emph{left-monotone}: we clearly
    have $f \cdot h \leq g \cdot h$ whenever $f \leq
    g$. Right-monotonicity additionally requires that the fixed argument be a
    monotone map: we have $f \cdot g \leq f \cdot h$ for every $g \leq
    h$ whenever $f$ is monotone.

  \item A  monoid $(M, \cdot, 1)$ is \emph{partially ordered} if $M$ carries
    a partial order such that multiplication is monotone: $a \leq
    b$ and $a'\leq b'$ implies $a \cdot a' \leq b \cdot b'$. It is
    \emph{directed complete} if it is a dcpo. An element $z\in M$ is a
    \emph{zero} if $z \cdot m = z = m \cdot z$ for
    every $m \in M$. 
  \end{enumerate}
\end{rem}
\removeThmBraces
\begin{theorem}[{\cite[Thm.~1]{Martin13}}]\label{T:Martin}
  Every directed complete monoid $(M, \cdot, 1)$ whose bottom
  is the unit $1$ has a top element which is a zero.
\end{theorem}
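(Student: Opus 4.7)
The plan is to exploit the hypothesis that $1$ is the bottom in order to show that $M$ is itself a directed subset of itself, and then to identify its join as both the top element and a zero.

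First, I would observe the key consequence of having $1$ as the bottom: for every $a, b \in M$ we have $1 \leq a$ and $1 \leq b$, so by monotonicity of multiplication (\autoref{R:funorder}),
\[
a \;=\; a \cdot 1 \;\leq\; a \cdot b \qquad\text{and}\qquad b \;=\; 1 \cdot b \;\leq\; a \cdot b.
\]
Thus the product of any two elements is an upper bound of both. Iterating, for any finite subset $\{m_1, \ldots, m_k\} \subseteq M$ the element $m_1 \cdots m_k \in M$ is an upper bound in $M$. Since $1 \in M$, the set $M$ is non-empty, so $M$ is a directed subset of itself.

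By directed completeness, the join $\top \coloneqq \bigvee M$ exists in $M$; by construction it is the greatest element. It remains to show $\top$ is a zero. For any $m \in M$, since $1 \leq m$ and multiplication is monotone, we get $\top = 1 \cdot \top \leq m \cdot \top$; and of course $m \cdot \top \leq \top$ because $\top$ is the top. Hence $m \cdot \top = \top$, and symmetrically $\top \cdot m = \top$.

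There is no real obstacle here: the entire argument is a two-line monotonicity observation followed by identifying $\bigvee M$. The one point that deserves care is verifying that $M$ literally qualifies as a directed subset of itself (non-empty with upper bounds for all finite subsets \emph{inside} $M$), which is exactly what the product provides. Notably, this proof uses no iteration, matching the constructive, ordinal-free spirit of Pataraia's approach invoked later in the paper.
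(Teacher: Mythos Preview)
Your proof is correct and follows essentially the same route as the paper's: show that $M$ is directed in itself by exhibiting the product $a\cdot b$ as an upper bound of $a$ and $b$ (using $1\leq a,b$ and monotonicity), take $\top=\bigvee M$, and verify it is a zero by the same monotonicity trick. The only difference is cosmetic---you spell out the finite-subset case and non-emptiness explicitly, whereas the paper treats pairs; the arguments are otherwise identical.
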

\resetCurThmBraces
\begin{proof}
   The set $M$ itself is directed: for $m, n \in M$ we see
  that $m \cdot n$ is an upper bound since
  \[
    m = m \cdot 1 \leq m \cdot n \geq n \cdot 1 = n,
  \]
  using that $1$ is the bottom and multiplication is monotone. 
  Thus $M$ has a top element $\top = \bigvee M$. We have $\top =
  \top \cdot 1 \leq \top \cdot m$ for every $m \in M$, and clearly $\top \cdot
  m \leq \top$. Thus, $\top \cdot m = \top$ and, similarly $m \cdot
  \top = \top$, whence $\top$ is a zero. 
\end{proof}
%
%
\begin{theorem}[Pataraia's Theorem]\label{T:Pataraia}
  Let $P$ be a dcpo with bottom. Then every monotone map on~$P$ has a
  least fixed point.
\end{theorem}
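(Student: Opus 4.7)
The plan is to apply Martin's theorem~\autoref{T:Martin} to an appropriate directed complete monoid of self-maps. Since $f$ is merely monotone (not necessarily inflationary), I first pass to the sub-poset $Q := \{x \in P : x \leq f(x)\}$ of post-fixed points. By monotonicity of $f$, $Q$ is closed under $f$; and for any directed $D \subseteq Q$ one has $d \leq f(d) \leq f(\bigvee D)$ for every $d \in D$, whence $\bigvee D \leq f(\bigvee D)$, so $Q$ is closed under directed joins in $P$. Since $\bot \in Q$ trivially, $Q$ is a sub-dcpo with bottom $\bot$, and $f$ restricts to a monotone \emph{inflationary} map $f|_Q \colon Q \to Q$.

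Next, consider the set $M$ of monotone inflationary maps $Q \to Q$ under the pointwise order. By~\autoref{R:funorder}, a pointwise directed join of such maps is again monotone and inflationary, so $M$ is a dcpo with bottom $\mathrm{id}_Q$. Composition equips $M$ with a monoid structure whose unit is $\mathrm{id}_Q$, coinciding with the bottom; by~\itemref{R:funorder}{R:funorder:3}, together with the monotonicity of every element of $M$, this composition is monotone in both arguments. Hence $M$ is a directed complete monoid whose bottom is the unit, and~\autoref{T:Martin} produces a top element $\top \in M$ which is a zero. Since $f|_Q \in M$, the zero property $f|_Q \cdot \top = \top$ evaluated at $\bot$ yields $f(\top(\bot)) = \top(\bot)$, so $\top(\bot)$ is a fixed point of $f$.

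The main obstacle is showing that this fixed point is the \emph{least} one; the bare zero property does not deliver this. The cleanest remedy I see is to replace $M$ throughout by its sub-monoid $N := \{p \in M : p(y) = y \text{ for every fixed point } y \text{ of } f\}$. Routine checks show $N$ still contains $\mathrm{id}_Q$ and $f|_Q$, is closed under composition (if $p_1, p_2$ fix $y$, then $p_1(p_2(y)) = p_1(y) = y$), and is closed under pointwise directed joins, so $N$ remains a directed complete monoid with bottom equal to the unit. Applying~\autoref{T:Martin} to $N$ yields a top zero $\top \in N$, and the preceding argument again makes $\mu := \top(\bot)$ a fixed point of $f$. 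For any fixed point $y$ of $f$, the defining property of $N$ gives $\top(y) = y$; combining $\bot \leq y$ with monotonicity of $\top$ then yields $\mu = \top(\bot) \leq \top(y) = y$, so $\mu$ is the least fixed point.
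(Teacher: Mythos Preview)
Your argument is correct, and it follows the same overall strategy as the paper: restrict to a sub-dcpo on which $f$ becomes inflationary, then apply Martin's theorem (\autoref{T:Martin}) to the monoid of monotone inflationary self-maps. The two proofs differ in how they arrange the details.

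For the inflationary restriction, you pass directly to $Q = \{x : x \leq f(x)\}$, whereas the paper takes the intersection $T$ of \emph{all} subsets of $P$ that contain $\bot$ and are closed under $f$ and directed joins, observing afterwards that $T \subseteq Q$. The payoff of the paper's choice comes in the leastness step: since $T$ is minimal, it is automatically contained in $\{y : y \leq x\}$ for every fixed point $x$, so the fixed point produced inside $T$ is least with no further work. Your route instead keeps the larger domain $Q$ but compensates by shrinking the \emph{monoid}: you pass from $M$ to the sub-monoid $N$ of maps that fix every fixed point of $f$, re-apply \autoref{T:Martin} there, and read off leastness from $\top(y) = y$ together with monotonicity of $\top$. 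This is a nice and perfectly valid variation; the closure of $N$ under composition and directed joins is as routine as you indicate. The trade-off is that the paper needs only one invocation of \autoref{T:Martin} and derives an induction principle (\autoref{C:PataInd}) along the way from the minimality of $T$, while your argument invokes the monoid machinery specifically tailored to leastness but does not yield that corollary directly.
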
 
%
%
\begin{proof}\mbox{ }
  \begin{enumerate}
  \item Let $M$ the set of all monotone inflationary maps on $P$. This
    is a monoid under function composition, with the unit
    $\id_P$. Furthermore, $M$ is a dcpo with bottom. Indeed, the order
    is the pointwise order from \autoref{R:funorder}, the least
    element is $\id_P$, and directed joins are computed pointwise in
    $P$. Function composition is monotone (in both arguments) by
    \itemref{R:funorder}{R:funorder:3}. By \autoref{T:Martin}, $M$
    therefore has a top element $t\colon P \to P$ which is a zero.
    
  \item\label{T:Pata:2}
    Let $f\colon P \to P$ be inflationary and monotone. Then $f\in M$
    and therefore $f \cdot t = t$. This means that for every
    $x \in P$, $f(t(x)) = t(x)$, whence $t(x)$ is a fixed point of
    $f$.

  \item\label{T:Pata:3} Now let $f\colon P \to P$ be just monotone. Let $\MM$ be the
    collection of all subsets $S$ of $P$ which contain $\bot$, are
    closed under $f$, and under joins of directed subsets. (In more
    detail, we require that if $s\in S$, then $f(s)\in S$; and if
    $X\subseteq S$ is directed, $\bigvee X\in S$.) Clearly, $\MM$ is
    closed under arbitrary intersections. Let $T = \bigcap
    \MM$.
    
    The set of all post-fixed points $x \leq f(x)$ belongs to $\MM$.
    Indeed, $\bot \leq f(\bot)$, and $f(x) \leq f(f(x))$ whenever
    $x \leq f(x)$. Moreover, a join $p = \bigvee D$ of a directed set
    $D$ of post-fixed points of $f$ is post-fixed point: $p$ satisfies
    $d \leq f(d) \leq f(p)$ for every $d \in D$ due to the
    monotonicity of $f$; thus, $p\leq f(p)$. By the minimality of $T$,
    we therefore know that $T$ consists of post-fixed points of
    $f$. Thus, $f$ restricts to a function $f\colon T \to
    T$. That restriction is inflationary (and monotone, of course) and
    therefore has a fixed point $p$ by~\autoref{T:Pata:2}.
    
  \item\label{T:Pata:4} We show that $p$ is a least fixed point of $f\colon P \to
    P$. Suppose that $x$ is any fixed point. The  
    set $L = \set{y \in P : y \leq x}$  belongs to $\MM$.
    Therefore $T \subseteq L$, which
    implies $p\leq x$. \qedhere
  \end{enumerate}
\end{proof}
\begin{corollary}
  The collection of all fixed points of a monotone map on a dcpo with
  bottom forms a sub-dcpo.
\end{corollary}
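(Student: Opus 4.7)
The plan is to establish two things about the set $\mathrm{Fix}(f) = \{x \in P : f(x) = x\}$: that it has a bottom element, and that every directed subset of it has a join (computed within $\mathrm{Fix}(f)$, not necessarily within $P$). The bottom element is immediate from Pataraia's Theorem itself: the least fixed point $\mu f$ of $f$ clearly serves as the bottom of $\mathrm{Fix}(f)$.

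For directed joins, suppose $D \subseteq \mathrm{Fix}(f)$ is directed. First I would form $p = \bigvee_P D$, which exists because $P$ is a dcpo. Monotonicity of $f$ gives $d = f(d) \leq f(p)$ for every $d \in D$, hence $p \leq f(p)$; that is, $p$ is a post-fixed point but need not itself be a fixed point.

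The key idea is then to apply Pataraia's Theorem a second time, to the upward closure $Q = \{y \in P : p \leq y\}$. This $Q$ is a dcpo with bottom $p$ (it is closed under directed joins in $P$ since it is upward closed, and $p$ is its least element). The restriction of $f$ lands in $Q$: for $y \in Q$ we have $p \leq f(p) \leq f(y)$, so $f(y) \in Q$. Pataraia's Theorem, applied to the monotone map $f|_Q : Q \to Q$, produces a least fixed point $p^* \in Q$.

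Finally I would verify that $p^*$ is the join of $D$ inside the poset $\mathrm{Fix}(f)$. It is an upper bound because $d \leq p \leq p^*$ for each $d \in D$ and $p^* \in \mathrm{Fix}(f)$. For least-ness, any $q \in \mathrm{Fix}(f)$ with $d \leq q$ for all $d \in D$ satisfies $p \leq q$, so $q \in Q$ and is a fixed point of $f|_Q$; by minimality of $p^*$ in $Q$, we get $p^* \leq q$. The only subtle point—and, I expect, the one a reader might want to pause on—is that the join of $D$ in $\mathrm{Fix}(f)$ generally differs from $p = \bigvee_P D$; the two-step use of Pataraia is precisely what bridges this gap without any transfinite iteration.
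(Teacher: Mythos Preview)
Your proposal is correct and essentially identical to the paper's proof: both take the directed join $p = \bigvee_P D$ (the paper calls it $w$), observe $p \leq f(p)$, restrict $f$ to the up-set $Q = \{y : p \leq y\}$ (the paper's $W$), and apply Pataraia's Theorem there to obtain the join of $D$ in $\mathrm{Fix}(f)$. The only difference is cosmetic: you explicitly note that $\mu f$ is the bottom of $\mathrm{Fix}(f)$, which the paper leaves implicit.
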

\noindent
This is analogous to fixed points of a monotone map on a complete
lattice forming a complete lattice again, see Tarski~\cite{Tarski55}.
\begin{proof}
  Let $f$ be monotone on the dcpo with bottom $P$. Put
  $S = \set{x \in P : x = f(x)}$. Suppose that $D \subseteq S$
  be a directed subset, and let $w = \bigvee D$ be its join in
  $P$. Then we have $x = f(x) \leq f(w)$ for every $x \in S$ since $f$
  is monotone. Therefore $w \leq f(w)$ since $w$ is the join of
  $S$. We now see that $f$ restricts to $W = \set{y \in P, w \leq y}$,
  the set of all upper bounds of~$D$ in $P$: for every $y \in W$ we
  have $w \leq f(w) \leq f(y)$, which shows that $f(y) \in
  W$. Moreover, $W$ is clearly a dcpo: it has least element $w$, and
  the join of every directed set of upper bounds of $D$ is an upper
  bound, too. By \autoref{T:Pataraia}, the restriction of $f$ to $W$
  has a least fixed point~$p$, say. In other words, $p$ is the least
  fixed point of $f$ among the upper bounds of~$D$ in~$P$, and
  therefore it is the desired join of $D$ in $S$.
\end{proof}

Here is our statement of a principle which we shall use later as a key step in our main result.
It also appears  in work by Escard\'o~\cite[Thm.~2.2]{Escardo03} and
Taylor~\cite{Taylor21}.
\begin{corollary}[Pataraia Induction Principle]\label{C:PataInd}
Let $P$ be a dcpo
  with bottom. If $f\colon P\to P$ is monotone, then $\mu f$ belongs
  to every subset $S\subseteq P$ which contains $\bot$ and is closed
  under~$f$ and under directed joins.
\end{corollary}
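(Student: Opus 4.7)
The plan is to extract the statement directly from the proof of Pataraia's Theorem (\autoref{T:Pataraia}), whose construction essentially provides the induction principle as a byproduct. Recall that in part (\ref{T:Pata:3}) of that proof, the class $\MM$ is defined to be exactly the collection of subsets $S\subseteq P$ containing $\bot$, closed under $f$, and closed under directed joins, and $T := \bigcap \MM$ is shown to be well defined because $\MM$ is closed under arbitrary intersections. The element $p$ constructed as a fixed point of the restriction $f\colon T\to T$ then lives in $T$, and part (\ref{T:Pata:4}) identifies $p$ with $\mu f$. Hence what I would do is simply observe that if $S\subseteq P$ satisfies the three closure conditions in the corollary, then $S\in\MM$ by definition of $\MM$, so $T\subseteq S$, and therefore $\mu f = p\in T \subseteq S$.

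Concretely, I would begin the proof by letting $S$ be an arbitrary subset of $P$ satisfying the three hypotheses, noting membership $S\in\MM$ where $\MM$ is the class from the proof of \autoref{T:Pataraia}, and concluding $T\subseteq S$ from the minimality of $T$ as an intersection. Then I would cite the proof of \autoref{T:Pataraia} to recall that $\mu f\in T$. There is no genuine obstacle here: the corollary is essentially a restatement of the construction of $\mu f$ in part (\ref{T:Pata:3}), separated out because it is the shape in which the result will be invoked later in the paper. The only small care needed is to phrase the appeal to \autoref{T:Pataraia} without re-running its proof, which can be done by pointing to the fact that the fixed point constructed there was obtained as an element of $T$ and was shown to be the least fixed point.
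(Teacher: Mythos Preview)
Your proposal is correct and matches the paper's own proof essentially verbatim: the paper simply notes that items~\ref{T:Pata:3} and~\ref{T:Pata:4} of the proof of \autoref{T:Pataraia} show $\mu f\in S$, which is precisely the argument you spell out (namely $S\in\MM$, hence $T\subseteq S$, and $\mu f = p\in T$).
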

\noindent
This follows from the proof of~\autoref{T:Pataraia}:
items~\ref{T:Pata:3} and~\ref{T:Pata:4} show that $\mu f\in S$.

We apply the above principle to prove the following result that we
will use in \autoref{S:dcpo}. A monotone function $f$ on a dcpo $D$
with bottom is \emph{continuous} if it preserves directed joins, and
\emph{strict} if $f(\bot) = \bot$.
\begin{lemma}\label{L:mu-pres}
  Let $P, Q$ be dcpos with bottom and let $f\colon P \to P$ and
  $g\colon Q \to Q$ be monotone. For every strict continuous map
  $h\colon P \to Q$ such that $g \cdot h = h \cdot f$ we have $h(\mu
  f) = \mu g$.%
  \smnote{No `and' in the first line, please. I'd like to keep this in
    two line as it looks much better this way.}
\end{lemma}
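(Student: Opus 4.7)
The plan is to prove the two inequalities $h(\mu f) \leq \mu g$ and $\mu g \leq h(\mu f)$ separately, exploiting the Pataraia Induction Principle (\autoref{C:PataInd}) for the first and Lambek-style minimality for the second.

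For the inequality $\mu g \leq h(\mu f)$, I would first check that $h(\mu f)$ is a fixed point of $g$: using $g\cdot h = h\cdot f$ and the fact that $\mu f$ is a fixed point of $f$, one computes $g(h(\mu f)) = h(f(\mu f)) = h(\mu f)$. Since $\mu g$ is the \emph{least} fixed point of $g$, we immediately obtain $\mu g \leq h(\mu f)$.

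The converse inequality $h(\mu f) \leq \mu g$ is the main point, and this is where I would apply \autoref{C:PataInd}. Consider the subset
\[
  S \;=\; \set{x \in P : h(x) \leq \mu g} \;\subseteq\; P.
\]
It suffices to verify that $S$ contains $\bot$, is closed under $f$, and is closed under directed joins: strictness of $h$ gives $h(\bot) = \bot \leq \mu g$, so $\bot \in S$; for $x \in S$ one has $h(f(x)) = g(h(x)) \leq g(\mu g) = \mu g$ by monotonicity of $g$ and the assumption, so $f(x)\in S$; and for a directed $D \subseteq S$, continuity of $h$ yields $h(\bigvee D) = \bigvee_{d \in D} h(d) \leq \mu g$, so $\bigvee D \in S$. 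The Pataraia Induction Principle then gives $\mu f \in S$, i.e.~$h(\mu f) \leq \mu g$.

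No step looks like a serious obstacle: the only delicate points are remembering to use strictness for the base case and continuity for the join-closure of $S$, both of which are explicit hypotheses. Combining the two inequalities and using antisymmetry finishes the proof.
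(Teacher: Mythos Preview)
Your proof is correct and is essentially identical to the paper's own argument: the paper also first shows $h(\mu f)$ is a fixed point of $g$ to obtain $\mu g \leq h(\mu f)$, and then applies the Pataraia Induction Principle to the very same set $S = \set{x \in P : h(x) \leq \mu g}$, checking the three closure conditions exactly as you do.
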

\begin{proof}
  First, $h(\mu f)$ is a fixed point of $g$: we have
  \(
  g (h(\mu f))
  =
  h(f(\mu f))
  =
  h(\mu f).
  \)
  Therefore $\mu g \leq h(\mu f)$. For the reverse, let $S =
  \set{x \in P : h(x) \leq \mu g}$. Since $h$ is strict, we see that
  $\bot \in S$. Moreover, $S$ is closed under $f$, for if $x \in S$ we
  obtain
  \(
  h(f(x)) = g(h(x)) \leq g(\mu g) = \mu g
  \)
  using monotonicity of $g$ in the second step. Finally, $S$ is closed
  under directed joins: if $D \subseteq S$ is a directed set we obtain
  \(
  h(\bigvee D) = \bigvee_{x \in D} h(x) \leq \bigvee_{x\in D} \mu g =
  \mu g,
  \)
  whence $\bigvee D$ lies in $S$. Thus, by \autoref{C:PataInd}, $\mu
  f\in S$, which means that $h(\mu f) \leq \mu g$.
\end{proof}

\subsection{Recursive Coalgebras}
\label{S:recoalg}

A crucial  ingredient for our new proof of the Initial Algebra Theorem
are recursive coalgebras. They are closely connected to well-founded
coalgebras and hence to the categorical formulation of well-founded
induction.%
\smnote{It must be `induction' here; not `recursion'; well-founded
  coalgebras generalize well-founded relations which are what you
  induct on in well-founded induction. The definition of a
  well-founded coalgebra really \emph{is} a (generalized) categorical formulation of
  the proof principle of well-founded induction (over a well-founded relation).} 
 In his work on categorical set theory, Osius~\cite{osius}
first studied the notions of well-founded and recursive coalgebras
(for the power-set 
functor on sets and, more generally, the
power-object functor on an elementary topos). He defined recursive
coalgebras as those coalgebras $\alpha\colon A \to \Pow A$ which have
a unique coalgebra-to-algebra homomorphism into every algebra (see
\autoref{D:recoalg}).

Taylor~\cite{taylor2,taylor3,Taylor21} considered recursive coalgebras
for a general endofunctor under the name `coalgebras obeying the
recursion scheme', and proved the General Recursion Theorem that all
well-founded coalgebras are recursive for more general endofunctors; a
new proof with fewer assumptions appears in recent work~\cite{amm20}.
Recursive coalgebras were also investigated by
Eppendahl~\cite{eppendahl99}, who called them algebra-initial
coalgebras.

Capretta, Uustalu, and Vene~\cite{cuv06} studied recursive coalgebras,
and they showed how to construct new ones from given ones by using
comonads. They also explained nicely how recursive coalgebras allow
for the semantic treatment of recursive divide-and-conquer
programs. Jeannin et al.~\cite{JeanninEA17} proved the
general recursion theorem for polynomial functors on the category of
many-sorted sets; they also provided many interesting examples of
recursive coalgebras arising in programming.

In this section we will just recall the definition and a few basic
results on recursive coalgebras which we will need for our proof of
the initial algebra theorem.
%
\begin{defn}\label{D:recoalg}
  A coalgebra $\gamma\colon C \to FC$ is \emph{recursive} if for every
  algebra $\alpha\colon FA \to A$ there exists a unique
  coalgebra-to-algebra morphism $h\colon C \to A$, i.e.~a
  unique morphism~$h$ such that the square below commutes:
  \begin{equation}\label{diag:coalg-to-alg}
    \begin{tikzcd}
      C 
      \ar{d}[swap]{\gamma} 
      \ar{r}{h}
      &
      A
      \\
      FC
      \ar{r}{Fh}
      &
      FA
      \ar{u}[swap]{\alpha}
    \end{tikzcd}
  \end{equation}
\end{defn}

Recursive coalgebras are regarded as a full subcategory of the
category $\Coalg F$ of all coalgebras for the functor $F$.
\begin{defn}
  A \emph{fixed-point} of an endofunctor is an object $C$ together
  with an isomorphism $C \cong FC$. We consider $C$ both as an algebra
  and a coalgebra for $F$. 
\end{defn}

\removeThmBraces
\begin{rem}[{\cite[Prop.~7]{cuv06}}]\label{R:iso-ini}
  Every recursive fixed point is an initial algebra:
  for a coalgebra $(C,\gamma)$ with $\gamma$ invertible, the
  coalgebra-to-algebra morphisms from $(C,\gamma)$ to an algebra
  $(A,\alpha)$ are the same as the algebra homomorphisms from
  $(C,\gamma^{-1})$ to $(A,\alpha)$.
\end{rem}
\resetCurThmBraces
%
%
\removeThmBraces
\begin{proposition}[{\cite[Prop.~6]{cuv06}}]\label{P:recFappl}
  If $(C, \gamma)$ is a recursive coalgebra, then so is $(FC, F\gamma)$. 
\end{proposition}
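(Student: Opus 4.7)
The plan is to pass between coalgebra-to-algebra morphisms out of $(C,\gamma)$ and those out of $(FC, F\gamma)$ via pre- and post-composition with $\gamma$ and $F(\cdot)$. Given any algebra $\alpha\colon FA\to A$, I want to produce a unique $h\colon FC\to A$ satisfying $h = \alpha\cdot Fh\cdot F\gamma$. By recursivity of $(C,\gamma)$ there is a unique $k\colon C\to A$ with $k=\alpha\cdot Fk\cdot\gamma$. The natural candidate is then
\[
  h \;:=\; \alpha\cdot Fk\colon FC \to A,
\]
since ``one unfolding of the recursion for $(C,\gamma)$'' should exactly absorb the extra application of $F$ in $(FC, F\gamma)$.

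For existence, I would verify that $h = \alpha\cdot Fh\cdot F\gamma$ by a short calculation: applying $F$ to the equation $k=\alpha\cdot Fk\cdot\gamma$ and post-composing with $\alpha$ gives $\alpha\cdot Fk = \alpha\cdot F(\alpha\cdot Fk\cdot\gamma) = \alpha\cdot F(\alpha\cdot Fk)\cdot F\gamma$, which is precisely $h = \alpha\cdot Fh\cdot F\gamma$.

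For uniqueness, suppose $h'\colon FC\to A$ is any coalgebra-to-algebra morphism from $(FC, F\gamma)$ to $(A,\alpha)$, i.e.\ $h' = \alpha\cdot Fh'\cdot F\gamma$. Set $k' := h'\cdot\gamma\colon C\to A$. Then
\[
  \alpha\cdot Fk'\cdot\gamma
  \;=\; \alpha\cdot F(h'\cdot\gamma)\cdot\gamma
  \;=\; \alpha\cdot Fh'\cdot F\gamma\cdot\gamma
  \;=\; h'\cdot\gamma
  \;=\; k',
\]
so $k'$ is a coalgebra-to-algebra morphism from $(C,\gamma)$ to $(A,\alpha)$. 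By uniqueness of $k$ we get $k' = k$, and then
\[
  h' \;=\; \alpha\cdot Fh'\cdot F\gamma \;=\; \alpha\cdot F(h'\cdot\gamma) \;=\; \alpha\cdot Fk' \;=\; \alpha\cdot Fk \;=\; h.
\]

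There is no real obstacle here; the only thing to get right is the choice of $h=\alpha\cdot Fk$ and the bookkeeping that turns a solution of the $(FC,F\gamma)$-recursion into a solution of the $(C,\gamma)$-recursion by pre-composition with $\gamma$. Everything else is a two-line diagram chase using functoriality of $F$.
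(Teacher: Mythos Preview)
Your proof is correct and is essentially identical to the paper's: both define the candidate morphism as $\alpha\cdot F(-)$ applied to the unique coalgebra-to-algebra morphism out of $(C,\gamma)$, verify the required square by applying $F$ to that morphism's defining equation, and obtain uniqueness by precomposing a competitor with $\gamma$ to reduce to recursivity of $(C,\gamma)$. The only difference is notational (the paper uses $h$ and $g$ where you use $k$ and $h$).
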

\resetCurThmBraces
\begin{proof}
  Let $(A,\alpha)$ be an algebra and denote by $h\colon C \to A$ the
  unique coalgebra-to-algebra morphism. We will show that
  \[
    g = \big(FC \xra{Fh} FA \xra{\alpha} A\big) 
  \]
  is the unique coalgebra-to-algebra morphism from $(FC,F\gamma)$ to
  $(A,\alpha)$. First, diagram~\eqref{diag:coalg-to-alg} for $g$
  commutes as can be seen on the left below:
  \[
    \begin{tikzcd}
      FC
      \ar{r}{Fh}
      \ar{d}[swap]{F\gamma}
      \ar[shiftarr = {yshift=20}]{rr}{g}
      &
      FA
      \ar{r}{\alpha}
      &
      A
      \\
      FFC
      \ar{r}{FFh}
      \ar[shiftarr = {yshift=-18}]{rr}{Fg}
      &
      FFA
      \ar{u}[swap]{F\alpha}
      \ar{r}{F\alpha}
      &
      FA
      \ar{u}[swap]{\alpha}
    \end{tikzcd}
    \qquad\qquad
    \begin{tikzcd}
      C
      \ar{r}{\gamma}
      \ar[shiftarr ={yshift=20}]{rr}{k \cdot \gamma}
      \ar{d}[swap]{\gamma}
      &
      FC
      \ar{r}{k}
      \ar{d}{F\gamma}
      &
      A
      \\
      FC
      \ar{r}{F\gamma}
      \ar[shiftarr = {yshift=-18}]{rr}{F(k \cdot \gamma)}
      &
      FFC
      \ar{r}{Fk}
      &
      FA
      \ar{u}[swap]{\alpha}
    \end{tikzcd}
  \]
  To see that $g$ is unique, suppose that $k\colon FC \to A$ is a
  coalgebra-to-algebra morphism from $(FC,F\gamma)$ to
  $(A,\alpha)$. Then $k \cdot \gamma\colon C \to A$ is one from
  $(C,\gamma)$ to $(A,\alpha)$. This is shown by the diagram on the
  right above. Thus, we have $h = k \cdot \gamma$, and we conclude
  that
  \[
    g = \alpha \cdot Fh = \alpha \cdot Fk \cdot F\gamma = k,
  \]
  where the last equation holds since $k$ is a coalgebra-to-algebra morphism.
\end{proof}
\begin{corollary}\label{C:Lambek-cor}
  If a terminal recursive $F$-coalgebra exists, it is a fixed point of~$F$.
\end{corollary}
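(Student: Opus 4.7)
The plan is to mimic the proof of Lambek's lemma, but within the subcategory of recursive $F$-coalgebras rather than the whole coalgebra category. Let $(T,\tau)$ denote the terminal recursive $F$-coalgebra. I want to show that $\tau\colon T\to FT$ is an isomorphism.

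First I would apply \autoref{P:recFappl} to conclude that $(FT, F\tau)$ is itself a recursive coalgebra. Hence by terminality there is a unique coalgebra morphism
\[
  f\colon (FT, F\tau)\to (T,\tau),
\]
which by definition satisfies $\tau\cdot f = Ff\cdot F\tau$. Next, observe that $\tau\colon T\to FT$ is a coalgebra morphism from $(T,\tau)$ to $(FT,F\tau)$, since the required equation $F\tau\cdot\tau = F\tau\cdot\tau$ is trivial. Composing, $f\cdot\tau\colon (T,\tau)\to (T,\tau)$ is therefore a coalgebra endomorphism of $(T,\tau)$.

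Now I would use terminality once more, but this time with source $(T,\tau)$ itself: since $\id_T$ is a coalgebra endomorphism of $(T,\tau)$, and such an endomorphism into the terminal recursive coalgebra is unique, we must have $f\cdot\tau = \id_T$. Combining this with the morphism condition for $f$ yields
\[
  \tau\cdot f \;=\; Ff\cdot F\tau \;=\; F(f\cdot\tau) \;=\; F(\id_T) \;=\; \id_{FT}.
\]
Thus $\tau$ is invertible with inverse $f$, so $(T,\tau)$ is a fixed point of $F$.

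There is no real obstacle here; the only subtle point is being careful that terminality is among \emph{recursive} coalgebras, which is precisely why \autoref{P:recFappl} is invoked to certify that $(FT,F\tau)$ lies in the subcategory over which $(T,\tau)$ is terminal. Everything else is formal diagram chasing, exactly parallel to Lambek's lemma.
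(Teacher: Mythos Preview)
Your proof is correct and follows exactly the same approach as the paper: both invoke \autoref{P:recFappl} to obtain that $(FT,F\tau)$ is recursive, use terminality to get the coalgebra morphism back into $(T,\tau)$, observe that $\tau$ is a coalgebra morphism in the other direction, and then conclude via uniqueness of the endomorphism of the terminal object, finishing with the same computation $\tau\cdot f = Ff\cdot F\tau = F(\id_T) = \id_{FT}$.
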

\noindent
Indeed, the proof is the same as that for Lambek's Lemma,
using \autoref{P:recFappl} to see that for a terminal recursive
coalgebra $(T,\ter)$, the coalgebra $(FT,F\ter)$ is recursive, too: the unique
coalgebra homomorphism $h\colon(FT, F\ter) \to (T\ter)$ satisfies $h
\cdot \ter = \id_T$ since $\ter\colon (T,\ter) \to (FT,F\ter)$ is a
coalgebra homomorphism, and finally, $\ter \cdot h = Fh \cdot F\ter =
F\id_T = \id_{FT}$. 
\removeThmBraces
\begin{theorem}[{\cite[Prop.~7]{cuv06}}]
 \label{T:rec-muF}
  The terminal recursive coalgebra is precisely the same as the initial
  algebra.
\end{theorem}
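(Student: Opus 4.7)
The plan is to prove two directions: (i) if a terminal recursive coalgebra $(T,\ter)$ exists then it is an initial algebra, and (ii) if an initial algebra $(\mu F, \ini)$ exists, then $(\mu F, \ini^{-1})$ is the terminal recursive coalgebra. Both rely on the bijection in \autoref{R:iso-ini} between coalgebra-to-algebra morphisms and algebra homomorphisms in the presence of an invertible structure map.

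For direction (i), I would invoke \autoref{C:Lambek-cor} to conclude that $\ter\colon T\to FT$ is an isomorphism, so $(T,\ter)$ is a recursive fixed point. Then \autoref{R:iso-ini} shows that $(T,\ter^{-1})$ is an initial algebra, because algebra homomorphisms from $(T,\ter^{-1})$ into any algebra $(A,\alpha)$ correspond bijectively to coalgebra-to-algebra morphisms $(T,\ter)\to (A,\alpha)$, and the latter exist uniquely by the recursivity of $(T,\ter)$.

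For direction (ii), given $(\mu F,\ini)$ initial, \autoref{R:iso-ini} again supplies a bijection between algebra homomorphisms $(\mu F,\ini)\to(A,\alpha)$ and coalgebra-to-algebra morphisms from the coalgebra $(\mu F,\ini^{-1})$ into $(A,\alpha)$; initiality of $\mu F$ then shows $(\mu F,\ini^{-1})$ is recursive. To prove terminality among recursive coalgebras, let $(C,\gamma)$ be any recursive coalgebra and consider the unique coalgebra-to-algebra morphism $h\colon C\to \mu F$ into the algebra $(\mu F,\ini)$. By definition, $h = \ini\cdot Fh\cdot\gamma$, which, using that $\ini$ is invertible, rewrites as $\ini^{-1}\cdot h = Fh\cdot\gamma$; that is precisely the statement that $h$ is a coalgebra morphism $(C,\gamma)\to(\mu F,\ini^{-1})$. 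The same algebraic manipulation shows the converse, so coalgebra morphisms $(C,\gamma)\to(\mu F,\ini^{-1})$ are in bijection with coalgebra-to-algebra morphisms $(C,\gamma)\to(\mu F,\ini)$, of which there is exactly one. Hence $(\mu F,\ini^{-1})$ is terminal among recursive coalgebras.

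There is no real obstacle here; the theorem is essentially a bookkeeping consequence of \autoref{R:iso-ini} (which translates between algebra homomorphisms and coalgebra-to-algebra morphisms when the structure is iso) combined with \autoref{C:Lambek-cor} (which upgrades a terminal recursive coalgebra to a fixed point). The only mild subtlety is to be careful that the correspondence between the two worlds treats the distinguished object both as a coalgebra (with structure $\ini^{-1}$, resp.\ $\ter$) and as an algebra (with structure $\ini$, resp.\ $\ter^{-1}$), and to record that uniqueness on one side transfers to uniqueness on the other.
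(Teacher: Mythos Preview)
Your proof is correct and follows essentially the same approach as the paper: both directions invoke \autoref{C:Lambek-cor} and \autoref{R:iso-ini} exactly as you do, and the terminality argument in~(ii) is the same observation that coalgebra-to-algebra morphisms into $(\mu F,\ini)$ coincide with coalgebra homomorphisms into $(\mu F,\ini^{-1})$. The paper's proof is simply terser, stating this correspondence without writing out the equation $\ini^{-1}\cdot h = Fh\cdot\gamma$.
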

\resetCurThmBraces
\noindent
In more detail, let $F\colon\A \to \A$ be an endofunctor. Then we have:
\begin{enumerate}
\item If $(T, \ter)$ is a terminal recursive coalgebra, then
  $(T, \ter^{-1})$ is a initial algebra.
\item If $(\mu F, \ini)$ is an initial algebra, then $(\mu F,
  \ini^{-1})$ is a terminal recursive coalgebra.
\end{enumerate}
\begin{proof}\mbox{ }
  \begin{enumerate}
  \item By \autoref{C:Lambek-cor}, we know that $\tau$ is an
    isomorphism. By \autoref{R:iso-ini}, $(T,\ter^{-1})$ is an initial
    algebra.
  
  \item The coalgebra $(\mu F, \ini^{-1})$ is clearly
    recursive. It remains to verify its terminality. So let
    $(C, \gamma)$ be a recursive coalgebra.  There is a unique
    coalgebra-to-algebra morphism from $(C,\gamma)$ to the algebra
    $(\mu F, \ini)$ to $(A,\alpha)$, and this means that there is a
    unique coalgebra homomorphism
    $h\colon (C,\gamma)\to (\mu F, \ini^{-1})$.\qedhere
  \end{enumerate}
\end{proof}
\begin{proposition}\label{P:colimrec}
  Every colimit of recursive coalgebras is recursive.
\end{proposition}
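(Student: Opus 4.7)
The plan is a routine universal-property argument, using that colimits in $\Coalg F$ are created by colimits in the base category $\A$: namely, if $D\colon \cat I \to \Coalg F$ has a colimit $(C,\gamma)$, then the underlying object $C$ is a colimit in $\A$ of the underlying diagram, and the structure $\gamma\colon C \to FC$ is uniquely determined by the requirement that each colimit injection $c_i\colon D(i) \to (C,\gamma)$ be a coalgebra homomorphism. Write $D(i) = (C_i, \gamma_i)$, so that $\gamma \cdot c_i = Fc_i \cdot \gamma_i$ holds for every $i \in \cat I$.

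Now fix an algebra $(A,\alpha)$. For each $i \in \cat I$, recursivity of $(C_i, \gamma_i)$ yields a unique coalgebra-to-algebra morphism $h_i\colon C_i \to A$. First I would verify that the family $(h_i)_{i \in \cat I}$ forms a cocone in $\A$ over the underlying diagram of~$D$. Given $f\colon i \to j$ in $\cat I$, the composite $h_j \cdot D(f)\colon C_i \to A$ is easily seen to be a coalgebra-to-algebra morphism from $(C_i,\gamma_i)$ to $(A,\alpha)$, so by the uniqueness part of recursivity of $(C_i,\gamma_i)$ it must equal $h_i$. Thus the universal property of the colimit gives a unique morphism $h\colon C \to A$ with $h \cdot c_i = h_i$ for all $i$.

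Next I would check that this $h$ is a coalgebra-to-algebra morphism by precomposing both $h$ and $\alpha \cdot Fh \cdot \gamma$ with each $c_i$ and showing they agree; since the $c_i$ are jointly epimorphic (being colimit injections), this will force $h = \alpha \cdot Fh \cdot \gamma$. Indeed, for each $i$ we compute
\[
  \alpha \cdot Fh \cdot \gamma \cdot c_i
  \;=\; \alpha \cdot Fh \cdot Fc_i \cdot \gamma_i
  \;=\; \alpha \cdot F(h \cdot c_i) \cdot \gamma_i
  \;=\; \alpha \cdot Fh_i \cdot \gamma_i
  \;=\; h_i \;=\; h \cdot c_i,
\]
using that $c_i$ is a coalgebra homomorphism and that $h_i$ is a coalgebra-to-algebra morphism. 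Finally, for uniqueness, suppose $h'\colon C \to A$ is any coalgebra-to-algebra morphism. Then $h' \cdot c_i$ is a coalgebra-to-algebra morphism from $(C_i,\gamma_i)$ to $(A,\alpha)$, so by recursivity it equals $h_i = h \cdot c_i$ for every~$i$, and the universal property of the colimit yields $h' = h$. There is no real obstacle here; the only nontrivial observation is the creation of colimits in $\Coalg F$ from colimits in $\A$, which makes the compatibility of the $h_i$ with the diagram a consequence of the uniqueness clause in the definition of recursiveness.
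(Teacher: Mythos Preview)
Your proof is correct and follows essentially the same approach as the paper: both use that colimits in $\Coalg F$ are created in $\A$, obtain the $h_i$ from recursivity, check they form a cocone via the uniqueness clause, and then verify that the induced $h$ is the unique coalgebra-to-algebra morphism using that the colimit injections are jointly epic. Your version spells out the cocone verification and the uniqueness argument a bit more explicitly, but the content is the same.
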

\begin{proof}
  We use the fact that the colimits in $\Coalg F$, the category of
  coalgebras for $F$, are formed on the level of the underlying
  category.  Suppose that we are given a diagram of recursive
  coalgebras $(C_i, \gamma_i)$, $i \in I$, with a colimit cocone
  $c_i\colon (C_i, \gamma_i) \to (C,\gamma)$ in $\Coalg F$. We prove
  that $(C,\gamma)$ is recursive, too. Indeed, given an algebra
  $(A,\alpha)$ one takes for every $i$ the unique coalgebra-to-algebra
  morphisms $h_i\colon (C_i,\gamma_i) \to (A,\alpha)$. Using unicity
  one sees that all $h_i$ form a cocone of the diagram formed by all
  $C_i$ in the underlying category. Therefore, there is a
  unique morphism $h\colon C \to A$ such that $h \cdot c_i = h_i$
  holds for all $i \in I$. We now verify that $h$ is the desired
  unique coalgebra-to-algebra morphism using the following diagram:
  \[
    \begin{tikzcd}
      C_i
      \ar{r}{c_i}
      \ar{d}[swap]{\gamma_i}
      \ar[shiftarr = {yshift=20}]{rr}{h_i}
      &
      C
      \ar{d}{\gamma}
      \ar[dashed]{r}{h}
      &
      A
      \\
      FC_i
      \ar{r}{Fc_i}
      \ar[shiftarr = {yshift=-18}]{rr}{Fh_i}
      &
      FC
      \ar{r}{Fh}
      &
      FA
      \ar{u}[swap]{\alpha}
    \end{tikzcd}
  \]
  We know that the upper and lower parts, the left-hand square and the
  outside commute. Therefore so does the right-hand square when
  precomposed by every $c_i$. Since the colimit injections $c_i$ form
  a jointly epic family, we thus see that the right-hand square
  commutes if and only if $h \cdot c_i = h_i$ holds for all $i \in I$. 
\end{proof}

\section{Smooth Monomorphisms}
\label{S:smooth}

As we have just seen in \autoref{P:colimrec}, the collection of recursive
coalgebras is closed under colimits.  In order to apply an
order-theoretic fixed point theorem to this collection, or to
subcollections of it, we need a connection between colimits and
subobjects.  We make this connection by using the definition of
\emph{smooth class} of monomorphisms in a category.

For an object~$A$ of a category $\A$, a \emph{subobject} is
represented by a monomorphism $s\colon S \monoto A$. If $s$ and
$t \colon T \monoto A$ are monomorphisms, we write $s \leq t$ if $s$
factorizes through $t$. If also $t \leq s$ holds, then $t$ and $s$
represent the same subobject; in particular $S$ and $T$ are then
isomorphic.
Generalizing a bit, let $\M$ be a class of monomorphisms. An
\emph{$\M$-subobject} of $A$ is a subobject represented by a morphism
$s\colon S\to A$ in $\M$. If the object $A$ has only a set of
subobjects, then we write
\[
  \Sub_{\M}(A)
\]
for the poset of $\M$-subobjects of $A$.

If every object $A$ only has a set of $\M$-subobjects, then $\A$ is
called \emph{$\M$-well-powered}.

\begin{defn}\label{D:smooth}
  Let $\M$ be a class of monomorphisms closed under isomorphisms and
  composition.
  \begin{enumerate}
  \item\label{D:smooth:1} We say that an object $A$ has \emph{\smooth
      $\M$-subobjects} provided that $\Sub_\M(A)$ is a dcpo with
    bottom (in particular, not a proper class) where the least element
    and directed joins are given by colimits of the corresponding
    diagrams of subobjects.

  \item\label{D:smooth:2} The class $\M$ is \emph{smooth} if every
    object of $\A$ has smooth $\M$-subobjects.
  \end{enumerate}
  Moreover, we say that a category has \emph{smooth monomorphisms} if
  the class of all monomorphisms is smooth.
\end{defn}
\begin{rem}\label{R:smooth}\mbox{ }
  \begin{enumerate}
  \item In more detail, let $D \subseteq \Sub_\M(A)$ be a directed set
    of subobjects represented by $m_i\colon A_i \monoto A$
    ($i \in D$). Then $D$ has a join $m\colon C \monoto A$ in
    $\Sub_\M(A)$. Moreover, consider the diagram of objects
    $(A_i)_{i \in D}$ with connecting morphisms
    $a_{i,j}\colon A_i \monoto A_j$ for $i \leq j$ in $D$ given by the
    unique factorizations witnessing $m_i \leq m_j$:
    \[
      \begin{tikzcd}
        A_i \arrow[>->]{rr}{a_{i,j}} \arrow[>->]{rd}[swap]{m_i} & & A_j
        \arrow[>->]{ld}{m_j} \\
        & A
      \end{tikzcd}
    \]
    (Note that $a_{i,j}$ need not lie in $\M$.)  Then for every
    $i \in D$ there exists a monomorphism $c_i\colon A_i \monoto C$ with
    $m \cdot c_i = m_i$, since $m_i \leq m$. The smoothness requirement
    is that these monomorphisms form a colimit cocone.
    
  \item Requiring that the least subobject in $\Sub_\M{A}$ is given by
    (the empty) colimit means that $\A$ has an initial object $0$ and
    the unique morphism $0 \monoto A$ 
    lies in $\M$.
  \item\label{R:smooth:3}
    If $\M$ is a smooth class, then $\A$ is $\M$-well-powered.
  \end{enumerate}
\end{rem}

Since the above notion of smoothness is new, we discuss examples at
length now.  Below we show that in a number of categories the
collection of all monomorphisms is smooth, as is the collection of all
strong monomorphisms (those having the diagonal fill-in property with
respect to epimorphisms).
%
%
We also present some counterexamples and
discuss other classes~$\M$.

Recall the concept of a \emph{locally finitely presentable}
(\emph{lfp}, for short) category (e.g.~\cite{ar}): it is a cocomplete
category $\A$ with a set of finitely presentable objects (i.e.~their
hom-functors preserve filtered colimits) whose closure under filtered
colimits is all of $\A$. Examples are $\Set$, $\Pos$ (posets and
monotone maps), $\Gra$ (graphs and homomorphisms) and all varieties of
finitary algebras such as monoids, vector spaces, rings, etc.

We say that $\A$ has a \emph{simple} initial object $0$ if all the
morphisms with domain $0$ are strong monomorphisms (equivalently, $0$
has no proper quotients).

\begin{example}\label{E:con}
  Both monomorphisms and strong monomorphisms are \smooth in every lfp
  category with a simple initial object $0$~\cite[Cor.~1.63]{ar}. This
  includes $\Set$, $\Pos$, $\Gra$, monoids and vector spaces. But not
  rings: in that category the initial object is $\Z$, the ring of
  integers, and there are non-monic ring homomorphisms with that
  domain (e.g.~$\Z \to 1$).
               
\end{example}
\begin{example}\label{E:dcpo}
  Let us consider the category $\DCPOb$ of dcpos with bottom and
  continuous maps between them, where a map is \emph{continuous} if it
  is monotone and preserves directed joins. 
  \begin{enumerate}
  \item In \autoref{S:dcpo} we prove that the class of all embeddings
    (\autoref{D:DCPOb-enriched}) is smooth.%
    \smnote{I think it's better to have the definition of embedding
      where it is (in the section where it is needed, and where it
      needs to be found easily, so let's make a forwards reference here.} 
    (These play a major
    role in Smyth and Plotkin's solution method for recursive domain
    equations~\cite{SmythPlotkin:82}.)  This example is one of several
    motivations for our move from the class of all monomorphisms to
    the more general situation of a class $\M$ in \autoref{D:smooth}.%

  \item\label{E:dcpo:2} In contrast, the class of all monomorphisms is non-\smooth in
    $\DCPOb$. For example, consider the dcpo $\N^\top$ of natural
    numbers with a top element $\top$. The subposets
    $C_n = \set{0, \ldots, n} \cup \set{\top}$, $n \in \N$, form an
    $\omega$-chain in $\DCPOb$. Its colimit is
    $\N^\top \cup\set{\infty}$ where $n < \infty < \top$ for all
    $n \in \N$.  The cocone of inclusion maps $C_n \subto \N^\top$
    consists of monomorphisms. However, the
    factorizing morphism from $\colim C_n$ to $\N^\top$ is not monic,
    as it merges $\infty$ and $\top$.

  \item The same example demonstrates that strong monomorphisms are
    not smooth in $\DCPOb$.
    
  \end{enumerate}
\end{example}
\begin{example}\label{E:cms}\mbox{ }
  \begin{enumerate}
  \item Let us consider the category $\MS$ of metric spaces with
    distances at most $1$ and \emph{non-expanding} maps
    $f \colon (X,d_X) \to (Y,d_Y)$ (that is
    $d_Y(f(x),f(y)) \leq d_X(x,y)$ for all $x,y, \in X$. Although this
    category is not lfp, both monomorphisms and strong monomorphisms
    form smooth classes. The proof for strong monomorphisms is easy
    since the strong (equivalently, extremal) subobjects of a metric
    space $A$ are represented by its subspaces (with the inherited
    metric). Given a directed set of subspaces $A_d \subseteq A$
    ($d \in D$) their join in $\Sub(A)$ the subspace
    $\bigcup_{d \in D} A_d$ and this is also the colimit of the
    corresponding diagram in $\MS$. The somewhat technical proof for
    monomorphisms is given in the appendix (\autoref{L:MS-smooth}).

  \item In the full subcategory $\CMS$ of $\MS$ given by all complete
    metric spaces monomorpisms are not smooth. This can be
    demonstrated as in \itemref{E:dcpo}{E:dcpo:2}: Let $\N^\top$
    be the metric space with distances $d(n,m)= |1/2^{-n} -1/2^{-m}|$
    and $d(n,\top)= 1/2^{-n}$, and consider the $\omega$-chain of
    spaces $C_n$ where $d(n,\top)=1$ and other distances are as in
    $\N^\top$.

  \item In contrast, strong monomorphisms are smooth in $\CMS$ (see
    \autoref{L:CMS-smooth}). 
  \end{enumerate}
\end{example}
The following equivalent formulation is often used in proofs.
\begin{proposition}\label{P:smooth}
  An object $A$ has \smooth $\M$-subobjects if and only if for every
  directed diagram $D$ of monomorphisms in $\A$ (not necessarily
  members of $\M$), and every cocone $m_i\colon A_i \monoto A$,
  $i\in D$, of $\M$-monomorphisms, the following holds:
  \begin{enumerate}
  \item\label{P:smooth:1} the diagram $D$ has a colimit, and
  \item\label{P:smooth:2} the factorizing morphism induced by the cocone 
    $(m_i)$ is again an $\M$-monomorphism.
  \end{enumerate}
\end{proposition}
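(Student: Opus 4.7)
The plan is to prove both directions by leveraging the essentially unique correspondence between a directed family of $\M$-subobjects of $A$ and a directed diagram of monomorphisms equipped with a cocone into $A$ consisting of $\M$-morphisms.

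For the forward direction, suppose $A$ has smooth $\M$-subobjects, and let $D$ be a directed diagram of monomorphisms with connecting maps $a_{i,j}\colon A_i \monoto A_j$, together with a cocone $(m_i\colon A_i \monoto A)_{i \in D}$ in $\M$. Since each $m_j$ is monic and $m_j \cdot a_{i,j} = m_i$ by the cocone condition, we first observe $[m_i] \leq [m_j]$ in $\Sub_\M(A)$, and moreover the $a_{i,j}$ coincide with the canonical factorizations witnessing these inequalities (these are unique because $m_j$ is monic). Hence $\{[m_i] : i \in D\}$ is a directed subset of $\Sub_\M(A)$. By smoothness, it has a join represented by some $m\colon C \monoto A$ in $\M$, and by \autoref{R:smooth}\ref{R:smooth}.\ref{R:smooth:3}\footnote{more precisely the clause in \autoref{D:smooth}\ref{D:smooth:1} describing how directed joins are formed}, the diagram $D$ has colimit $C$ with cocone $c_i\colon A_i \monoto C$ such that $m \cdot c_i = m_i$. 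This yields both \ref{P:smooth:1} and \ref{P:smooth:2}, the factorizing morphism being $m \in \M$.

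For the backward direction, assume the stated conditions hold. To show $\Sub_\M(A)$ is a dcpo in which directed joins are given by colimits, take any directed subset $D \subseteq \Sub_\M(A)$ with representatives $m_i\colon A_i \monoto A$ in $\M$. The canonical connecting morphisms $a_{i,j}$ arising from $m_i \leq m_j$ are monomorphisms (as $m_i$ is), and together with the cocone $(m_i)$ they satisfy the hypotheses of \ref{P:smooth:1}--\ref{P:smooth:2}. So the diagram has a colimit $C$ with cocone $c_i$, and the induced factorizing morphism $m\colon C \to A$ lies in $\M$. It remains to verify that $[m]$ is the join of $\{[m_i]\}$: clearly $[m_i] \leq [m]$ via $c_i$, and if $n\colon N \monoto A$ in $\M$ satisfies $m_i = n \cdot f_i$ for each $i$, then the $f_i$ form a cocone on $D$ (uniqueness follows from $n$ being monic), yielding a unique $f\colon C \to N$ with $f \cdot c_i = f_i$; then $n \cdot f \cdot c_i = m_i = m \cdot c_i$, and since the $c_i$ are jointly epic (being a colimit cocone), $n \cdot f = m$, so $[m] \leq [n]$.

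The argument is largely straightforward diagram chasing, and the main subtlety lies in establishing that the connecting morphisms in the directed diagram of $\M$-subobjects agree with those given in $D$, which hinges precisely on each $m_j$ being monic. I would also add a short remark that the initial object and least element of $\Sub_\M(A)$ (the empty directed case) are handled by \autoref{R:smooth}\ref{R:smooth}.\ref{R:smooth:3}, since a bottom in $\Sub_\M(A)$ given by an initial colimit amounts exactly to $0 \in \A$ existing with $0 \monoto A$ in $\M$, and this case is either subsumed by allowing the empty directed diagram or can be stated as a separate corollary of the proposition.
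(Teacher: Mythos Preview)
Your proof is correct and takes essentially the same approach as the paper: the paper declares the `only if' direction obvious and proves the `if' direction exactly as you do (writing $s, s_i, t$ for your $n, f_i, f$), including the verification that the $f_i$ form a cocone via monicity of $n$ and the appeal to joint epicness of the colimit injections to conclude $n\cdot f = m$. One small point worth flagging in your forward direction is that smoothness literally yields the colimit of the diagram indexed by the directed subset $\{[m_i]\}\subseteq\Sub_\M(A)$ rather than of the given $D$; a brief finality argument for the surjective monotone map $i \mapsto [m_i]$ identifies the two colimits, and both you and the paper leave this step implicit.
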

\begin{proof}
  The `only if' direction is obvious. For the `if' direction, suppose
  we are given a directed set $D \subseteq \Sub(A)$ of $\M$-subobjects
  $m_i\colon A_i \monoto A$ for $i \in D$ as in~\autoref{R:smooth}.
  By~\autoref{P:smooth:1}, the ensuing directed diagram of
  monomorphisms $a_{i,j}\colon A_i \monoto A_j$ has a colimit
  $c_i\colon A_i \to C$, $i \in D$, and we will prove that this yields
  the join $\bigvee_{i \in D} m_i$. By~\autoref{P:smooth:2}, we have a
  unique $\M$-monomorphism $m\colon C \monoto A$ such that
  $m \cdot c_i = m_i$ for all $i \in D$.

  Now let $s\colon S \monoto A$ be any $\M$-subobject with $m_i \leq s$
  for all $i \in D$. That is, we have morphisms $s_i\colon A_i
  \to S$ with $s \cdot s_i = m_i$ for all $i \in D$. They
  form a cocone because for the monomorphism $a_{i,j}\colon A_i \monoto A_j$
  witnessing $m_i \leq m_j$ we have
  \[
    s \cdot s_j \cdot a_{i,j} = m_j \cdot a_{i,j} = m_i = s\cdot s_i,
  \]
  whence $s_j \cdot a_{i,j} = s_i$ since $s$ is monic. We therefore
  obtain a unique $t\colon C \to S$ with $t \cdot c_i = s_i$ for
  all $i \in D$. Consequently, we have
  \[
    s \cdot t \cdot c_i = s\cdot s_i = m_i = m \cdot c_i
    \qquad
    \text{for all $i \in D$.}
  \]
  Since the colimit injections $c_i$ form an epic family, we conclude
  that $s \cdot t = m$, which means that $m \leq s$ in $\Sub_\M(A)$, as
  desired.
\end{proof}
\begin{rem}\mbox{ }
  \begin{enumerate}
  \item Note that the conditions for $\M$ to be \smooth are a part of
    the conditions of Taylor's notion of a \emph{locally complete
      class of supports}~\cite[Def.~6.1. \& 6.3]{taylor3} (see
    also~\cite[Assumption~4.18]{Taylor21}).

  \item Smoothness previously appeared for joins and colimit of chains
    in lieu of directed sets~\cite{amm20}. That formulation is related
    to the list of conditions for a class of monomorphisms given by
    Trnkov\'a et al.~\cite{takr}. Note that a class $\M$ of
    monomorphisms containing the identities and closed under
    composition can be regarded as the subcategory of $\A$ given by
    all morphisms in $\M$. The list of conditions in op.~cit.~is
    equivalent to stating that the inclusion functor $\M \subto \A$
    creates colimits of chains. Requiring that the inclusion creates
    directed colimits implies that the class $\M$ is smooth. For the
    converse, we would need to add that for every directed diagram of
    $\M$-monomorphisms the colimit cocone consists of
    $\M$-monomorphisms.
  \end{enumerate}
\end{rem}

\section{The Initial Algebra Theorem}
\label{S:ini-thm}

We are now ready to prove the main result of this paper.

\begin{assumption}\label{A:ass}
  Throughout this section we assume that $\A$ is a category with a class $\M$ of
  monomorphisms containing all isomorphisms and closed under
  composition. We say that $F\colon \A \to \A$ \emph{preserves
    $\M$} if $m \in \M$ implies $Fm \in \M$.
\end{assumption}
\begin{defn}
  An \emph{$\M$-pre-fixed point} of $F$ is an algebra whose
  structure $m\colon FA \monoto A$ lies in $\M$. In the case where
  $\M$ consists of all monomorphisms we speak of a pre-fixed point.
%
\end{defn}
\begin{theorem}[Initial Algebra Theorem]\label{T:initial}
  Let  $m\colon FA \monoto A$ be an $\M$-pre-fixed point for an
  endofuctor preserving $\M$. If $A$ has
  \smooth $\M$-subobjects, then $F$ has an initial algebra which is an
  $\M$-subalgebra of~$(A,m)$.
\end{theorem}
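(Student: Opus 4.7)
The plan is to apply Pataraia's theorem to $\Sub_\M(A)$ and then exploit the theory of recursive coalgebras. First I would introduce the operator $f\colon \Sub_\M(A) \to \Sub_\M(A)$ sending an $\M$-subobject $u\colon B \monoto A$ to $f(u) = m \cdot Fu$, which lies in $\M$ since $F$ preserves $\M$ and $\M$ is closed under composition; monotonicity is immediate from functoriality. By smoothness, $\Sub_\M(A)$ is a dcpo with bottom, so \autoref{T:Pataraia} supplies a least fixed point, represented by some $u\colon B \monoto A$. Since $u$ and $f(u) = m \cdot Fu$ then represent the same subobject and both are monic, there is a unique iso $\alpha\colon FB \to B$ with $u \cdot \alpha = m \cdot Fu$. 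Thus $(B, \alpha)$ is a fixed point of $F$, and the same equation makes $u$ an algebra homomorphism $(B, \alpha) \to (A, m)$, exhibiting $(B, \alpha)$ as an $\M$-subalgebra of $(A, m)$ once we show it is initial.

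To establish initiality, by \autoref{R:iso-ini} it suffices to show that the coalgebra $(B, \alpha^{-1})$ is recursive; for this I would apply the Pataraia Induction Principle (\autoref{C:PataInd}) to the set
\[
  S = \set{v \in \Sub_\M(A) : \mathrm{dom}(v) \text{ carries a recursive coalgebra } \gamma_v \text{ with } v = m \cdot F v \cdot \gamma_v}.
\]
Once $\mu f = u$ is shown to lie in $S$ with some witness $\gamma$, the equation $u = m \cdot Fu \cdot \gamma = u \cdot \alpha \cdot \gamma$ combined with monicity of $u$ forces $\gamma = \alpha^{-1}$, as required. The base case $\bot \in S$ is straightforward: smoothness represents the bottom by the unique map $0 \monoto A$, and the necessarily unique coalgebra structure on the initial object is vacuously recursive. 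Closure of $S$ under $f$ uses \autoref{P:recFappl}: if $\gamma_v$ witnesses $v \in S$, then $F \gamma_v$ is a recursive coalgebra, and applying $F$ to the defining equation for $v$ yields the analogous equation for $f(v)$.

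The main obstacle is closure of $S$ under directed joins. Given a directed family $(v_i)_{i \in D}$ in $S$ with witnesses $\gamma_i\colon C_i \to FC_i$, smoothness presents the join $v\colon C \monoto A$ as a colimit $C = \colim_i C_i$ in $\A$ of the diagram formed by the connecting monomorphisms $a_{i,j}\colon C_i \monoto C_j$, with injections $c_i\colon C_i \to C$. To apply \autoref{P:colimrec} I must first lift this to a diagram in $\Coalg F$. A short diagram chase using $v_j \cdot a_{i,j} = v_i$ and the defining equations for $v_i, v_j$ gives $m \cdot F v_j \cdot F a_{i,j} \cdot \gamma_i = m \cdot F v_j \cdot \gamma_j \cdot a_{i,j}$; cancelling the monomorphisms $m$ and $F v_j$ (the latter being in $\M$) shows that each $a_{i,j}$ is a coalgebra homomorphism. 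The universal property of the colimit in $\A$ then induces a coalgebra structure $\gamma_v$ on $C$ making every $c_i$ a coalgebra homomorphism, \autoref{P:colimrec} yields recursiveness of $(C, \gamma_v)$, and one last application of joint epicness of the $c_i$ establishes $v = m \cdot F v \cdot \gamma_v$, so $v \in S$.
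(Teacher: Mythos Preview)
Your proposal is correct and follows essentially the same approach as the paper: define the monotone map $f(u) = m\cdot Fu$ on $\Sub_\M(A)$, apply Pataraia to obtain $\mu f$, and use the Pataraia Induction Principle on the set $S$ of subobjects carrying a recursive coalgebra structure witnessing $u \leq f(u)$, with closure of $S$ under directed joins established by the same cancellation-of-monos argument to lift the diagram to $\Coalg F$ and then invoking \autoref{P:colimrec}. The only difference is cosmetic ordering: you first extract the isomorphism $\alpha$ from the fixed-point property and afterwards identify the recursive witness $\gamma$ with $\alpha^{-1}$, whereas the paper obtains the recursive coalgebra structure directly and then observes it is invertible.
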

\begin{proof}
  We have the following endomap
  \begin{equation}\label{eq:f}
    f\colon\Sub_\M(A)\to \Sub_\M(A)\quad\text{defined by}\quad 
    f\big(\begin{tikzcd}[cramped, sep = 20]
      B \arrow[>->]{r}{u} & A
    \end{tikzcd}%
    \big) = \big(%
    \begin{tikzcd}[cramped, sep = 20]
      FB \arrow[>->]{r}{Fu} &
      FA \arrow[>->]{r}{m} & A
    \end{tikzcd}%
    \big).
  \end{equation}
  It is clearly monotone. We are going to apply Pataraia Induction to
  it. We take the subset $S \subseteq \Sub_\M(A)$ of all
  $u\colon B \monoto A$ such that $u\leq f(u)$ via some recursive
  coalgebra $\beta\colon B \to FB$. More precisely,
  \[
    S = \set{u\colon B \monoto A :
      \text{$u = m\cdot Fu\cdot \beta$ for some recursive coalgebra $\beta\colon B \monoto FB$}}.
  \]
  Note that if $\beta$ exists for $u$, then it is unique. Moreover, $u \in S$ is a
  coalgebra-to-algebra morphism from $(B,\beta)$ to $(A,m)$.

  The least subobject $0 \monoto A$ is clearly contained in
  $S$.  Further, $S$ is closed under $f$ since $(FB, F\beta)$ is a recursive
  coalgebra by \autoref{P:recFappl}: for $u \in S$ we have%
  \[
    f(u) = m \cdot Fu = m \cdot F(m \cdot Fu \cdot \beta) = m \cdot F(f(u))
    \cdot F\beta.
  \]
  We continue with the verification that $S$ is closed under directed
  joins. Let $D\subseteq S$ be directed. Given $u\colon B_u \monoto A$ in $D$ we write
  $\beta_u\colon B_u \to FB_u$ for the recursive coalgebra witnessing
  $u \leq f(u)$. We show that these recursive coalgebras form a (then
  necessarily) directed diagram. To see this,
  we only need to prove that every morphism $h\colon B_u \monoto B_v$
  witnessing $u \leq v$ in $D$; i.e.~$v \cdot h = u$, is a coalgebra
  homomorphism. Consider the diagram below:
    \[
    \begin{tikzcd}
      B_u
      \arrow{r}{\beta_u}
      \arrow{d}[swap]{h}
      \arrow[shiftarr = {xshift=-20}]{dd}[swap]{u}
      &
      FB_u
      \arrow{d}{Fh}
      \arrow[shiftarr = {xshift=25}]{dd}{Fu}
      \\
      B_v
      \arrow{r}{\beta_v}
      \arrow{d}[swap]{v}
      &
      FB_v
      \arrow[>->]{d}{Fv}
      \\
      A
      &
      FA
      \arrow[>->]{l}[swap]{m}
    \end{tikzcd}
  \]
  Since the outside, the lower square and the left-hand and right-hand
  parts commute, we see that the upper square commutes when extended
  by the monomorphism $m \cdot Fv$. Thus it commutes,
  proving that $h$ is a coalgebra homomorphism.

  Now denote by $v\colon B \monoto A$ the join $\bigvee D$ in
  $\Sub_\M(A)$. Since $A$ has smooth subobjects, $B$ is the colimit of
  the diagram formed by the $B_u$, $u \in D$, in $\A$. Since the
  forgetful functor $\Coalg F \to \A$ creates colimits, we have a
  unique coalgebra structure $\beta\colon B \to FB$ such that the
  colimit injections are coalgebra homomorphisms; moreover $(B,\beta)$
  is colimit of the coalgebras $(B_u, \beta_u)$, $u \in D$. Thus,
  $(B,\beta)$ is recursive by \autoref{P:colimrec}. Moreover,
  $v\colon B \monoto A$ is the unique morphism induced by the cocone
  given by all $u\colon B_u \monoto A$ in $D$. Since every
  $u \in S$ is the unique coalgebra-to-algebra morphism from
  $(B_u,\beta_u)$ to $(A,m)$, we know from the proof of
  \autoref{P:colimrec} that $v$ is the unique coalgebra-to-algebra
  morphism from $(B,\beta)$ to~$(A,m)$. Thus, $v$
  lies in $S$.
  
  By \autoref{T:Pataraia}, $f$ has a least fixed point, and
  by~\autoref{C:PataInd}, $\mu f\in S$. Denote this subobject be
  $u\colon I \monoto A$.  Since $u \in S$, there is a recursive
  coalgebra $\iota\colon I \to FI$ such that $u = m \o Fu \o \iota$.
  But $u$ and $f(u) = m\cdot Fu$ represent the same subobject of $A$.
  So $\iota$ is an isomorphism.  Thus $(I,\iota^{-1})$ is an initial
  algebra by \autoref{R:iso-ini}.
\end{proof}
\begin{corollary}\label{C:initial}
  Let $\A$ be a category with a \smooth class  $\M$ of
  monomorphisms.  Then the
  following are equivalent for every endofunctor $F$ preserving $\M$:
  \begin{enumerate}
  \item\label{C:initial:1} an initial algebra exists,
  \item\label{C:initial:2} a fixed point exists,  
  \item\label{C:initial:3} an $\M$-pre-fixed point exists. 
  \end{enumerate}
  Moreover, if these hold, then $\mu F$ is an
  $\M$-subalgebra of every $\M$-pre-fixed point of $F$.
\end{corollary}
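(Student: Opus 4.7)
The plan is to establish the cycle of implications \ref{C:initial:1} $\Rightarrow$ \ref{C:initial:2} $\Rightarrow$ \ref{C:initial:3} $\Rightarrow$ \ref{C:initial:1}, and then read off the moreover clause from the construction in \autoref{T:initial}. Two of the three implications are immediate, and the remaining one is precisely the main theorem just proved, so no new work is expected; this is really a packaging result.

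For \ref{C:initial:1} $\Rightarrow$ \ref{C:initial:2}, I would invoke Lambek's Lemma: the structure $\iota\colon F(\mu F)\to \mu F$ of any initial algebra is an isomorphism, so $\mu F$ is in particular a fixed point. For \ref{C:initial:2} $\Rightarrow$ \ref{C:initial:3}, observe that a fixed point is an algebra $\alpha\colon FA\to A$ with $\alpha$ an isomorphism; by \autoref{A:ass} the class $\M$ contains all isomorphisms, hence $\alpha\in\M$ and $(A,\alpha)$ is an $\M$-pre-fixed point.

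The implication \ref{C:initial:3} $\Rightarrow$ \ref{C:initial:1} is exactly the content of \autoref{T:initial}: given an $\M$-pre-fixed point $m\colon FA\monoto A$, the smoothness of the whole class $\M$ (\itemref{D:smooth}{D:smooth:2}) guarantees in particular that $A$ has smooth $\M$-subobjects, so \autoref{T:initial} applies and produces an initial algebra for $F$, realised as an $\M$-subalgebra of $(A,m)$.

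For the moreover clause, let $(A,m)$ be any $\M$-pre-fixed point. Applying \autoref{T:initial} to $(A,m)$ yields an initial algebra together with an $\M$-monomorphism from its carrier into $A$ that is compatible with the algebra structures, i.e.\ an $\M$-subalgebra of $(A,m)$. Since initial algebras are unique up to isomorphism, this subalgebra may be identified with $\mu F$, giving the desired conclusion. The main (and essentially only) conceptual obstacle is verifying that the quantifier in \itemref{D:smooth}{D:smooth:2} supplies exactly the hypothesis ``$A$ has smooth $\M$-subobjects'' required to feed $(A,m)$ into \autoref{T:initial}; once this is noted, the corollary is a direct consequence.
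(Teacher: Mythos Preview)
Your proposal is correct and follows exactly the same route as the paper: Lambek's Lemma for \ref{C:initial:1}~$\Rightarrow$~\ref{C:initial:2}, the fact that $\M$ contains isomorphisms for \ref{C:initial:2}~$\Rightarrow$~\ref{C:initial:3}, and \autoref{T:initial} for \ref{C:initial:3}~$\Rightarrow$~\ref{C:initial:1} together with the moreover clause. The only difference is that you spell out the (trivial) observation that global smoothness of $\M$ specializes to smoothness of $\Sub_\M(A)$ for the given pre-fixed point, which the paper leaves implicit.
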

Indeed, Lambek's Lemma~\cite{lambek} tells us that \ref{C:initial:1}
implies \ref{C:initial:2}. Clearly, \ref{C:initial:2} implies
\ref{C:initial:3} since $\M$ contains all isomorphisms.
\autoref{T:initial} shows that that \ref{C:initial:3} implies
\ref{C:initial:1}, and it also yields our last statement.
\begin{corollary}
  Let $\A$ be an lfp category with a simple initial object. An
  endofunctor preserving monomorphisms has an initial algebra iff it
  has a pre-fixed point. 
\end{corollary}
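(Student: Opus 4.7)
The plan is to derive this final corollary as an immediate specialization of \autoref{C:initial} by taking $\M$ to be the class of \emph{all} monomorphisms in $\A$. Under this choice, an $\M$-pre-fixed point is precisely a pre-fixed point, and an endofunctor preserving $\M$ is precisely one preserving monomorphisms, so the three equivalent conditions in \autoref{C:initial} become: initial algebra exists, fixed point exists, pre-fixed point exists. The only direction that needs work is the implication from pre-fixed point to initial algebra; the reverse follows from Lambek's Lemma since a fixed point is trivially a pre-fixed point.

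To invoke \autoref{C:initial}, I need to verify that the class of all monomorphisms is smooth in every lfp category $\A$ with a simple initial object $0$. This is the content of \autoref{E:con}, which is asserted to follow from \cite[Cor.~1.63]{ar}. In more detail, I would check: (a) $\A$ has an initial object $0$ and the unique map $0 \to A$ is a monomorphism for each $A$ (this is the ``simple initial object'' hypothesis, since that map factors through any mono with codomain $A$, and simplicity forces it to be monic); (b) $\A$ is well-powered with respect to monomorphisms (true in any lfp category); (c) for each $A$, $\Sub(A)$ is a dcpo in which directed joins are given by colimits in $\A$, and the factorizing morphism from the colimit of a directed diagram of subobjects into $A$ is again monic. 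Point (c) is where the lfp hypothesis does its work: directed colimits in an lfp category behave well with respect to monomorphisms, via the standard characterization using filtered colimits of finitely presentable objects.

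With smoothness in hand, \autoref{T:initial} applied to the hypothesized pre-fixed point $m\colon FA \monoto A$ produces an initial algebra (realized as a mono-subalgebra of $(A,m)$), completing the non-trivial direction. I would then briefly note that this also yields the sharper statement that $\mu F$ embeds via a monomorphism into every pre-fixed point, as in the ``moreover'' clause of \autoref{C:initial}.

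The main obstacle is conceptual rather than computational: it lies entirely in justifying smoothness of monomorphisms in lfp categories with simple initial objects. Since that fact is already stated in \autoref{E:con} with a reference to \cite{ar}, the present corollary reduces to a one-line citation of \autoref{C:initial} plus \autoref{E:con}.
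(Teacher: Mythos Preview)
Your proposal is correct and matches the paper's intended argument: the corollary is stated immediately after \autoref{C:initial} with no separate proof, and is meant to follow by taking $\M$ to be all monomorphisms and invoking \autoref{E:con} for smoothness---precisely what you do. One small point: your parenthetical justification for ``$0\to A$ is monic'' is roundabout; in the paper a \emph{simple} initial object is defined directly as one for which every morphism out of $0$ is a strong monomorphism, so no factorization argument is needed.
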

\begin{example}\label{E:ass-necc}
  We present examples which show, \emph{inter alia}, that neither of
  the hypotheses in \autoref{T:initial} can be left out.  In each case
  $\M$ is the class of all monomorphisms.
  \begin{enumerate}
  \item The assumption that $0 \to A$ is monic. Let $\A$ be the
    variety of algebras $(A,u,c)$ with unary operation $u$ and a
    constant $c$. Its initial object is $(\N, s, 0)$ with
    $s(n) = n+1$, which is not simple. We present an endofunctor
    having no initial algebra
    even though it has a fixed point and preserves
    monomorphisms. Let $\Pow_0$ be the non-empty power-set functor. We
    obtain an analogous endofunctor
    $\bar\Pow_0$ on~$\A$ defined by
    $\bar\Pow_0(A,u,c) = (\Pow_0 A, \Pow_0u, \set{c})$. It clearly
    preserves monomorphisms, and the terminal object $1$ is a fixed
    point of $\bar \Pow_0$ (since $\Pow_0 1 \cong 1$). This is, up to
    isomorphism, the only fixed point. However, it is not
    $\mu \bar\Pow_0$ because given an algebra on $(A,u,c)$ with
    $u(x) \neq x$ for all $x \in A$, no $\bar\Pow_0$-algebra
    homomorphism exists from $1$ to $A$.

  \item The assumption that $\Sub(A)$ is a set in
    \itemref{D:smooth}{D:smooth:1}. Let $\Ord$ be the totally
    ordered class of all ordinals taken as a category. In the opposite
    category $\Ord^\opp$, all morphisms are monic, so every
    endofunctor preserves monomorphisms. For the functor $F$ on
    $\Ord^\opp$ given by $F(i) = i+1$,
    every object is a pre-fixed point, and there are no fixed points.
    For each object~$i$, $\Sub(i)$ has all the properties requested in
    \itemref{D:smooth}{D:smooth:1} except that it is a proper
    class.
    
  \item\label{E:ass-necc:3} Preservation of monomorphisms. Here we
    use the category $\Set \times \Set$ which satisfies all
    assumptions of \autoref{T:initial-2}. We define an endofunctor $F$
    by $F(X,Y) = (\emptyset, 1)$ if $X \neq \emptyset$ and $F(X,Y) =
    (\emptyset, \Pow Y)$ else. 
    It is defined on morphisms as expected, using $\Pow$ in the case
    where $X = \emptyset$.%
    \smnote{I vote for leaving this; it's still correct English and
      the type setting looks better this way (not with a line starting
      with $X$).}
    This functor has many pre-fixed points,
    e.g.~$F(1,1) = (\emptyset, 1) \monoto (1,1)$. But it has no fixed
    points (thus no initial algebra): first, $(\emptyset,Y)$ and
    $(\emptyset, \Pow Y)$ are never isomorphic, by Cantor's
    Theorem~\cite{Cantor91}. Second, if $X \neq \emptyset$, then there
    exists no morphism from $(X,Y)$ to $F(X,Y) = (\emptyset, 1)$.
  \end{enumerate}
\end{example}
\section{Initial Algebras in $\DCPOb$-enriched Categories}
\label{S:dcpo}

It follows from the seminal paper by Smyth and
Plotkin~\cite{SmythPlotkin:82} that every locally continuous functor
$F$ on a category $\A$ enriched over $\omega$-cpos (i.e.~partial
orders with a least element and joins of $\omega$-chains) has an
initial algebra $(\mu F, \ini)$ which is also a terminal coalgebra by
inverting its structure. Local continuity means that the corresponding
mappings $\A(A,B) \to \A(FA,FB)$ preserve (pointwise) directed joins. Here we assume
the weaker property that $F$ is locally monotone; for example, the
endofunctor assigning to a dcpo its ideal completion is locally monotone, whence preserves
embeddings, but not locally continuous. We apply \autoref{C:initial}
to derive that such an endofunctor has a pre-fixed point given by an
embedding iff it has an initial algebra (being also the terminal
coalgebra).
\begin{defn}\label{D:DCPOb-enriched}\mbox{ }
  \begin{enumerate}
  \item A category $\A$ is \emph{$\DCPOb$-enriched} provided that each
    hom-set is equipped with the structure of a dcpo with bottom, and
    composition preserves bottom and directed joins:%
    \smnote{Without explanation, it is at least not sufficient clear
      what `preserves bottom' means; it could mean
      $\bot \cdot \bot = \bot$.}  for every morphism $f$ and
    appropriate directed sets of morphisms $g_i$ ($i \in D$) we have
    \begin{equation}\label{eq:dcpo-enriched}
      \textstyle
      f \cdot \bot = \bot, \quad \bot \cdot f = \bot, \quad
      f \cdot \bigvee_{i\in D} g_i = \bigvee_{i\in D} f \cdot g_i,\quad
      \big(\bigvee_{i \in D} g_i\big) \cdot f = \bigvee_{i\in D} g_i
      \cdot f.
    \end{equation}
  \item A functor on $\A$ is \emph{locally monotone} if its
    restrictions $\A(A,B) \to \A(FA,FB)$ to the hom-sets are monotone.
    
  \item A morphism $e\colon A \to B$ is called an \emph{embedding} if
    there exists a morphism $\ehat\colon B \to A$ such that
    $\ehat \cdot e = \id_A$ and $e \cdot \ehat \sqsubseteq \id_B$.
  \end{enumerate}
\end{defn}
\noindent
It is easy to see that the morphism $\ehat$ is unique for $e$; it
is called its \emph{projection}. 

The following result is a slight variation of a result by Smyth and
Plotkin for $\omega$-cpos~\cite{SmythPlotkin:82}.  We include the
proof in the appendix for the convenience of the reader.
\begin{theorem}\label{Basic-Lemma-D}
  Let $D$ be a directed diagram of embeddings in a $\DCPOb$-enriched
  category. For every cocone $(c_i\colon D_i \to C)$ of $D$, the
  following are equivalent:
  \begin{enumerate}
  \item\label{Basic-Lemma:1D} The cocone $(c_i)$ is a colimit.
  \item\label{Basic-Lemma:2D} Each $c_i$ is an embedding, the composites
    $c_i \cdot \widehat{c}_i$ form a directed set in $\A(C,C)$, and
    \begin{equation}\label{eq-supD}
      \textstyle
      \bigsqcup_{i} c_i \cdot \widehat{c}_i = \id_C.
    \end{equation}
  \end{enumerate}
\end{theorem}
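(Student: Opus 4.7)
The plan is to prove both directions by combining the universal property of the colimit with the $\DCPOb$-enrichment of $\A$, using two computational facts throughout: first, embeddings and their projections compose, so that whenever $i\le k\le j$ and $d_{i,j} = d_{k,j}\cdot d_{i,k}$, we have $\widehat{d}_{i,j} = \widehat{d}_{i,k}\cdot \widehat{d}_{k,j}$; second, composition preserves directed joins in both arguments (equation~\eqref{eq:dcpo-enriched}).

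For the non-trivial direction \ref{Basic-Lemma:1D}$\Rightarrow$\ref{Basic-Lemma:2D}, fix an index $i$ and build a cocone $(p_{i,j}\colon D_j \to D_i)_j$ over the diagram $D$ by setting $p_{i,j} = d_{j,i}$ when $j \le i$, $p_{i,j} = \widehat{d}_{i,j}$ when $i \le j$, and $p_{i,j} = \widehat{d}_{i,k}\cdot d_{j,k}$ in general, where $k$ is any index above both $i$ and $j$. A case analysis using the splitting identities shows that this is independent of the choice of $k$ and satisfies the cocone equations. The universal property then yields a unique $\widehat{c}_i\colon C \to D_i$ with $\widehat{c}_i\cdot c_j = p_{i,j}$ for all $j$; in particular $\widehat{c}_i\cdot c_i = \id_{D_i}$, so $c_i$ is split monic. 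A second uniqueness argument gives $\widehat{c}_i = \widehat{d}_{i,k}\cdot \widehat{c}_k$ for $i\le k$, and hence
\[
  c_i\cdot \widehat{c}_i \;=\; c_k\cdot d_{i,k}\cdot \widehat{d}_{i,k}\cdot \widehat{c}_k \;\sqsubseteq\; c_k\cdot \widehat{c}_k,
\]
showing that the family $(c_i\cdot \widehat{c}_i)_i$ is directed. Precomposing with $c_j$ one obtains $c_i\cdot \widehat{c}_i\cdot c_j = c_j$ for $i\ge j$, so by continuity of composition $\bigl(\bigsqcup_i c_i\cdot \widehat{c}_i\bigr)\cdot c_j = c_j = \id_C\cdot c_j$ for every $j$; uniqueness of colimit mediators forces $\bigsqcup_i c_i\cdot \widehat{c}_i = \id_C$, which simultaneously yields $c_i\cdot \widehat{c}_i \sqsubseteq \id_C$ and thus shows that each $c_i$ is an embedding with projection $\widehat{c}_i$.

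For the converse \ref{Basic-Lemma:2D}$\Rightarrow$\ref{Basic-Lemma:1D}, given any cocone $(f_j\colon D_j \to X)_j$ over $D$, define the candidate mediator $f = \bigsqcup_i f_i\cdot \widehat{c}_i$; directedness of this join is verified exactly as above. For $i\ge j$ the identity $\widehat{c}_i\cdot c_j = d_{j,i}$ gives $f_i\cdot \widehat{c}_i\cdot c_j = f_j$, whence $f\cdot c_j = f_j$ by continuity of composition. Uniqueness is immediate: any $g\colon C \to X$ with $g\cdot c_i = f_i$ for all $i$ satisfies
\[
  g \;=\; g\cdot \id_C \;=\; g\cdot \bigsqcup_i c_i\cdot \widehat{c}_i \;=\; \bigsqcup_i f_i\cdot \widehat{c}_i \;=\; f.
\]

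I expect the main obstacle to be the opening step of the first direction, namely the construction and coherence of the cocone $(p_{i,j})_j$ when $D$ is only assumed to be directed rather than linearly ordered: one must check independence of the auxiliary index~$k$ and the cocone equations by a case analysis on the relative positions of $i$, $j$, and $k$. Everything else reduces to routine applications of the embedding/projection splitting identities and the $\DCPOb$-continuity of composition.
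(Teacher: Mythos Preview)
Your proposal is correct and follows essentially the same route as the paper: construct $\widehat{c}_i$ via the universal property, verify directedness of $(c_i\cdot\widehat{c}_i)_i$, establish~\eqref{eq-supD} by testing against the jointly epic colimit injections, and for the converse define the mediator as $\bigsqcup_i f_i\cdot\widehat{c}_i$ and obtain uniqueness from~\eqref{eq-supD}. The one difference is that the paper sidesteps what you flag as your ``main obstacle'': rather than building the full cocone $(p_{i,j})_j$ with its case analysis, it restricts to the cofinal subdiagram on indices $j\ge i$ (which has the same colimit), where the projections $\widehat{d}_{i,j}$ already form a cocone with vertex $D_i$ without any well-definedness check.
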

\begin{rem}
  A $\DCPOb$-enriched category $\A$ is $\M$-well-powered for the class
  $\M$ of all embeddings. The reason is that, given an object $A$, a
  subobject represented by an embedding $e\colon S \to A$ is
  determined by the endomorphism $e\cdot \ehat$ on $A$. Indeed, let
  $f\colon T \to A$ be an embedding with
  $e\cdot \ehat = f \cdot \widehat f$. Then
  $e = e \cdot \ehat \cdot e = f \cdot \widehat f \cdot e$.
  Therefore, $e \leq f$ in $\Sub_\M(A)$. By symmetry $f \leq e$. Since
  $\A(A,A)$ is a set, $\M$-well-poweredness follows.
\end{rem}
\begin{theorem}\label{T:dcpos-smooth}
  Let $\A$ be a $\DCPOb$-enriched category with directed
  colimits. Then the class of all embeddings is \smooth.
\end{theorem}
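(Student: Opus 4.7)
The plan is to verify the two parts of smoothness for the class $\M$ of embeddings: that $\Sub_\M(A)$ is a dcpo with bottom, and that bottom and directed joins are computed via colimits in $\A$. Well-poweredness follows from the preceding remark, so I focus on the directed joins, which carry the technical content; the bottom is given by the initial object (implicitly included among the directed colimits).

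Fix a directed family of embeddings $(e_i \colon B_i \monoto A)_{i \in D}$ with connecting morphisms $a_{i,j} \colon B_i \to B_j$ ($i \leq j$) determined by $e_j \cdot a_{i,j} = e_i$. A preparatory observation is that each $a_{i,j}$ is itself an embedding with projection $\widehat{e}_i \cdot e_j$: this is verified by the identities $(\widehat{e}_i \cdot e_j) \cdot a_{i,j} = \widehat{e}_i \cdot e_i = \id_{B_i}$ and $a_{i,j} \cdot (\widehat{e}_i \cdot e_j) \sqsubseteq \id_{B_j}$, using $a_{i,j} = \widehat{e}_j \cdot e_i$. By the uniqueness of projections one then derives $\widehat{e}_i = \widehat{a}_{i,j} \cdot \widehat{e}_j$. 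Thus $(B_i, a_{i,j})$ is a directed diagram of embeddings to which \autoref{Basic-Lemma-D} applies.

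Let $(c_i \colon B_i \to C)_{i \in D}$ be the colimit of this diagram in $\A$, which exists by assumption. By \autoref{Basic-Lemma-D}, each $c_i$ is an embedding and $\bigsqcup_i c_i \cdot \widehat{c}_i = \id_C$. The universal property yields a unique $e \colon C \to A$ with $e \cdot c_i = e_i$, and the core of the argument is to show that $e$ is an embedding representing $\bigvee_i e_i$. Using $c_i = c_j \cdot a_{i,j}$ together with $\widehat{e}_i = \widehat{a}_{i,j} \cdot \widehat{e}_j$, the family $(c_i \cdot \widehat{e}_i)_{i \in D}$ is directed in $\A(A,C)$, so I define $\widehat{e} := \bigsqcup_i c_i \cdot \widehat{e}_i$. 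The enrichment axioms \eqref{eq:dcpo-enriched} give $e \cdot \widehat{e} = \bigsqcup_i e_i \cdot \widehat{e}_i \sqsubseteq \id_A$. For the opposite identity, for each fixed $i$ cofinality in $D$ yields
\[
  \widehat{e} \cdot e \cdot c_i
  = \bigsqcup_{j \geq i} c_j \cdot \widehat{e}_j \cdot e_i
  = \bigsqcup_{j \geq i} c_j \cdot a_{i,j}
  = c_i,
\]
and joint epic-ness of $(c_i)$ yields $\widehat{e} \cdot e = \id_C$.

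To conclude, $e$ represents the supremum in $\Sub_\M(A)$: the equation $e_i = e \cdot c_i$ shows $e_i \leq e$, and any other upper bound $f \colon X \monoto A$ with $e_i = f \cdot g_i$ induces a cocone $(g_i)$ factoring uniquely through $C$ to witness $e \leq f$. The principal obstacle is the equation $\widehat{e} \cdot e = \id_C$: establishing it requires combining directedness of $D$, the enrichment axiom that composition distributes over directed joins, and joint epic-ness of the colimit cocone.
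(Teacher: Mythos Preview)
Your proof is correct and follows essentially the same route as the paper's: show the connecting maps $a_{i,j}$ are embeddings with projections $\widehat{e}_i\cdot e_j$, form the colimit $C$, define the candidate projection $\widehat{e}=\bigsqcup_i c_i\cdot\widehat{e}_i$, and verify the two embedding identities, the key one $\widehat{e}\cdot e=\id_C$ via joint epicness of the colimit cocone. The only cosmetic differences are that the paper routes through \autoref{P:smooth} rather than the raw definition, and does not invoke \autoref{Basic-Lemma-D} (your appeal to it is harmless but not actually needed, since you recompute $\widehat{e}\cdot e\cdot c_i=c_i$ directly); also, the paper spells out why the $g_i$ form a cocone in the final upper-bound step, which you leave implicit.
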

\noindent
The proof is presented in \autoref{S:dcpos-smooth}.
%
%
\begin{corollary}\label{C:dcpo}
  Let $\A$ be a $\DCPOb$-enriched category with directed colimits.
  For a locally monotone endofunctor $F$ the following are equivalent:
  \begin{enumerate}
  \item\label{C:dcpo:1} an initial algebra exists,
  \item\label{C:dcpo:2} a terminal coalgebra exists,
  \item\label{C:dcpo:3} a fixed point exists.
  \end{enumerate}
  Moreover, if $(\mu F, \ini)$ is an initial algebra, then $(\mu F,
  \ini^{-1})$ is a terminal coalgebra.
\end{corollary}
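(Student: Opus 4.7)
My plan is to deduce (1) $\Leftrightarrow$ (3) from \autoref{C:initial}, and then to obtain (1) $\Rightarrow$ (2) by combining \autoref{T:rec-muF} with \autoref{L:mu-pres}, noting that (2) $\Rightarrow$ (3) is just Lambek's Lemma.

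For (1) $\Leftrightarrow$ (3) I would apply \autoref{C:initial} to the class $\M$ of all embeddings. This class contains every isomorphism (with $\hat e = e^{-1}$) and is closed under composition: if $e_1, e_2$ are embeddings with projections $\hat e_1, \hat e_2$, then $\hat e_1 \cdot \hat e_2$ is the projection of $e_2 \cdot e_1$, as one verifies using the monotonicity of composition from \eqref{eq:dcpo-enriched}. The functor $F$ preserves embeddings: from $\hat e \cdot e = \id$ and $e \cdot \hat e \sqsubseteq \id$, local monotonicity yields $F\hat e \cdot Fe = \id$ and $Fe \cdot F\hat e = F(e \cdot \hat e) \sqsubseteq F(\id) = \id$. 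Finally, $\M$ is smooth by \autoref{T:dcpos-smooth}. Since every fixed point is an $\M$-pre-fixed point (isomorphisms being embeddings), \autoref{C:initial} gives the equivalence.

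The main task is (1) $\Rightarrow$ (2); I claim that the initial algebra $(\mu F, \iota)$ provides the terminal coalgebra via its inverse structure. Given any coalgebra $(C, \gamma)$, the map $\phi_\gamma \colon \A(C, \mu F) \to \A(C, \mu F)$ defined by $\phi_\gamma(k) = \iota \cdot Fk \cdot \gamma$ is a monotone endomap on a dcpo with bottom (by the enrichment and local monotonicity of $F$), and its fixed points are precisely the coalgebra morphisms $(C, \gamma) \to (\mu F, \iota^{-1})$. Pataraia's theorem then furnishes existence, namely $h^\ast = \mu \phi_\gamma$.

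The hard part is uniqueness, which I would resolve via \autoref{L:mu-pres}. Consider also the monotone map $\psi$ on $\A(\mu F, \mu F)$ given by $\psi(k) = \iota \cdot Fk \cdot \iota^{-1}$; its fixed points are precisely the coalgebra endomorphisms of $(\mu F, \iota^{-1})$. By \autoref{T:rec-muF}, $(\mu F, \iota^{-1})$ is the terminal recursive coalgebra and is itself recursive, so $\id_{\mu F}$ is its unique coalgebra endomorphism, whence $\mu \psi = \id_{\mu F}$. For any coalgebra morphism $r \colon (C, \gamma) \to (\mu F, \iota^{-1})$, the map $h_r \colon \A(\mu F, \mu F) \to \A(C, \mu F)$ sending $k \mapsto k \cdot r$ is strict and continuous by the enrichment axioms, and the coalgebra-morphism equation $\iota^{-1} \cdot r = Fr \cdot \gamma$ is exactly what is needed for $\phi_\gamma \cdot h_r = h_r \cdot \psi$. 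Hence \autoref{L:mu-pres} yields $r = \id_{\mu F} \cdot r = h_r(\mu \psi) = \mu \phi_\gamma = h^\ast$, establishing uniqueness and so terminality.
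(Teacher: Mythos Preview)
Your argument is correct, and for $(1)\Leftrightarrow(3)$ it coincides with the paper's proof. For $(1)\Rightarrow(2)$ and the ``Moreover'' clause, however, you take a genuinely different route.

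The paper proceeds by \emph{duality}: it observes that $\A^\opp$ is again $\DCPOb$-enriched with the same order, that embeddings in $\A^\opp$ are exactly the projections in $\A$, and that $F^\opp$ preserves them. Applying \autoref{C:initial} to $F^\opp$ then gives $(2)\Leftrightarrow(3)$. For the ``Moreover'' part the paper uses the \emph{last} statement of \autoref{C:initial} twice (once for $F$, once for $F^\opp$) to see that the canonical comparison $\mu F\to\nu F$ is simultaneously an embedding and a projection, hence an isomorphism.

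Your approach instead proves directly that $(\mu F,\ini^{-1})$ is terminal, via Pataraia for existence and \autoref{L:mu-pres} for uniqueness. This is precisely the argument the paper gives for the \emph{Proposition immediately following} \autoref{C:dcpo}; you have in effect folded that proposition into the corollary. One small simplification: rather than invoking \autoref{T:rec-muF} to see that $\id_{\mu F}$ is the only fixed point of $\psi$, you can note that fixed points of $\psi$ are exactly algebra endomorphisms of $(\mu F,\ini)$, whence $\id$ is the unique one by initiality. What your route buys is that it sidesteps the need to check that the class of embeddings in $\A^\opp$ is smooth (i.e.\ that \autoref{T:dcpos-smooth} applies to $\A^\opp$, which tacitly uses the limit--colimit coincidence to obtain the relevant colimits in $\A^\opp$). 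What the paper's route buys is a cleaner symmetry between $(1)$ and~$(2)$ and a conceptually uniform proof of the ``Moreover'' clause via subalgebra embeddings on both sides.
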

\renewcommand{\itemautorefname}{Item}
\autoref{C:dcpo:3} can be strengthened to state existence of a
pre-fixed point carried by an embedding.
\renewcommand{\itemautorefname}{item}
\begin{proof}
  The dual category $\A^\opp$ is $\DCPOb$-enriched w.r.t.~the same
  order on hom-sets. But the embeddings in $\A^\opp$ are precisely the
  projections in $\A$. Every locally monotone endofunctor $F$ on $\A$
  clearly preserves embeddings and  projections. Thus, the dual
  functor $F^\opp$ on $\A^\opp$ preserves embeddings.
  Now \ref{C:dcpo:1} $\Leftrightarrow$ \ref{C:dcpo:3} follows from an
  application of \autoref{C:initial} to $\A$ and $F$, and
  \mbox{\ref{C:dcpo:2} $\Leftrightarrow$ \ref{C:dcpo:3}} is an
  application to $\A^\opp$ and $F^\opp$. In each case the class $\M$
  consists of all embeddings in $\A$ and $\A^\opp$, respectively.
  
  Finally, we prove that the initial algebra and terminal coalgebra
  coincide. Let $\ini\colon F I \to I$ be an initial algebra. Then we
  know that a terminal coalgebra $\ter\colon T \to FT$
  exists. Moreover, from the last statement in \autoref{C:initial}
  applied to $F$ and its fixed point $(T,\ter^{-1})$ we see that the
  unique $F$-algebra homomorphism $e\colon (I,\ini) \to
  (T,\ter^{-1})$ is an embedding. Another application of \autoref{C:initial}
  to $F^\opp$ and its fixed point $(I,\ini^{-1})$ yields
  that the unique~$F^\opp$-algebra homomorphism $f\colon\mu F^\opp =
  (T,\ter) \to (I,\ini^{-1})$ is an embedding in $\A^\opp$. This
  means that this an $F$-coalgebra homomorphism $f\colon (I,\ini^{-1}) \to
  (T,\ter)$ which is a projection in $\A$. By the universal properties
  of $(I,\ini)$ and $(T,\ter)$, $e = f$, and this morphism is both an
  embedding and a projections, whence an isomorphism.  
\end{proof}
The requirement of local monotonicity of $F$ can be weakened: the
theorem holds for any endofunctor $F$ which fulfils
$Ff \sqsubseteq \id_{FA}$ whenever $f \sqsubseteq \id_{A}$. Indeed, a
functor satisfying that property preserves embedding-projection pairs;
in categories with split idempotents the converse holds,
too~\cite[Obs.~6.6.5]{amvbook}.

We close this section with a proposition on locally
monotone functors which gives a version of a result for
$\omega$-cpo-enriched categories proved by
Freyd~\cite{freyd:Cambridge} for locally continuous functors. He used
Kleene's Theorem in lieu of Pataraia's.%
\begin{proposition}
  Let $\A$ be a $\DCPOb$-enriched category. If a locally monotone
  functor~$F$%
  \smnote{No `endo' here; I'd like $F$ to stay on that line and not
    start the next line.}  has an initial algebra $(\mu F, \ini)$,
  then $(\mu F, \ini^{-1})$ is a terminal coalgebra.
\end{proposition}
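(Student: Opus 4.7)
The plan is, for each $F$-coalgebra $\gamma \colon C \to FC$, to produce a unique coalgebra homomorphism into $(\mu F, \ini^{-1})$ as the unique fixed point of the monotone endomap
\[
  \Phi_{(C,\gamma)} \colon \A(C, \mu F) \to \A(C, \mu F), \qquad h \longmapsto \ini \cdot Fh \cdot \gamma,
\]
on the dcpo with bottom $\A(C, \mu F)$. Local monotonicity of $F$ together with the monotonicity of composition built into the $\DCPOb$-enrichment make $\Phi_{(C,\gamma)}$ monotone, so by Pataraia's Theorem (\autoref{T:Pataraia}) it has a least fixed point $h^{\ast} \coloneqq \mu \Phi_{(C,\gamma)}$. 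The fixed points of $\Phi_{(C,\gamma)}$ are precisely the $F$-coalgebra homomorphisms $(C,\gamma) \to (\mu F, \ini^{-1})$, since $h = \ini \cdot Fh \cdot \gamma$ is equivalent to $\ini^{-1} \cdot h = Fh \cdot \gamma$. This yields existence.

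For uniqueness I would first instantiate this construction at the coalgebra $(\mu F, \ini^{-1})$ itself. The least fixed point $e \coloneqq \mu \Phi_{(\mu F, \ini^{-1})}$ satisfies $e = \ini \cdot Fe \cdot \ini^{-1}$, i.e.\ $e \cdot \ini = \ini \cdot Fe$; so $e$ is an $F$-algebra endomorphism of $(\mu F, \ini)$, whence $e = \id_{\mu F}$ by initiality. In other words, $\id_{\mu F}$ is the least fixed point of $\Phi_{(\mu F, \ini^{-1})}$.

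Now let $h \in \A(C, \mu F)$ be any fixed point of $\Phi_{(C,\gamma)}$; by minimality, $h^{\ast} \leq h$. For the reverse inequality the plan is to apply the Pataraia Induction Principle (\autoref{C:PataInd}) to $\Phi_{(\mu F, \ini^{-1})}$ and the subset
\[
  T = \set{g \in \A(\mu F, \mu F) \,:\, g \cdot h \leq h^{\ast}}.
\]
Membership of $\bot$ and closure of $T$ under directed joins follow directly from the enrichment equations~\eqref{eq:dcpo-enriched}. The crucial step, which I expect to be the main obstacle, is the closure of $T$ under $\Phi_{(\mu F, \ini^{-1})}$: for $g \in T$,
\[
  \Phi_{(\mu F, \ini^{-1})}(g) \cdot h
  \;=\; \ini \cdot Fg \cdot \ini^{-1} \cdot h
  \;=\; \ini \cdot F(g \cdot h) \cdot \gamma
  \;\leq\; \ini \cdot F h^{\ast} \cdot \gamma \;=\; h^{\ast},
\]
using the coalgebra-morphism equation $\ini^{-1} \cdot h = Fh \cdot \gamma$ in the second step, and local monotonicity of $F$ together with $g \cdot h \leq h^{\ast}$ in the third. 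Pataraia Induction then places $\id_{\mu F} = e$ in $T$, yielding $h = \id_{\mu F} \cdot h \leq h^{\ast}$. Combined with $h^{\ast} \leq h$ this gives $h = h^{\ast}$, so $h^{\ast}$ is the unique coalgebra homomorphism. The genuinely non-obvious move is the choice of $T$: by precomposition with $h$, a potential fixed point of $\Phi_{(\mu F, \ini^{-1})}$ is transported to a fixed point of $\Phi_{(C,\gamma)}$, which can be compared with the minimal one.
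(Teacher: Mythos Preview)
Your proof is correct and follows essentially the same approach as the paper: existence via Pataraia's Theorem applied to the monotone endomap $h \mapsto \ini \cdot Fh \cdot \gamma$ on $\A(C,\mu F)$, and uniqueness by showing that any coalgebra homomorphism $h$ is sent to $h^\ast$ by the precomposition map $k \mapsto k \cdot h$ from $\A(\mu F,\mu F)$ to $\A(C,\mu F)$, using that $\id_{\mu F}$ is the least fixed point on the source side. The only difference is packaging: the paper factors the uniqueness step through \autoref{L:mu-pres} (applied to the strict continuous map $\A(h,\mu F)$ intertwining the two endomaps), whereas you inline that lemma's Pataraia-Induction argument directly via the set $T = \set{g : g \cdot h \leq h^\ast}$, which is exactly the set $S$ appearing in the proof of \autoref{L:mu-pres}.
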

\noindent
We shall see in the proof that it is enough to assume that composition
is left-strict: $\bot \cdot f = \bot$ holds for every morphism $f$ of
$\A$ (but $f \cdot \bot = \bot$ in~\eqref{eq:dcpo-enriched} need not
hold). This holds in categories typically used in semantics of
programming languages, such as the category of dcpos with bottom and
(non-strict) continuous maps, where composition is not (right-)
strict.
\begin{proof}
  Let $\ini\colon F I \to I$ be an initial algebra. For every
  coalgebra $\alpha\colon A \to FA$, we prove that a unique
  homomorphism into $(I, \ini^{-1})$ exists.
  \begin{enumerate}
  \item Existence. The endomap $g$ on $\A(A,I)$ given by
    $h \mapsto \ini \cdot Fh \cdot \alpha$ is monotone since $F$ is
    locally monotone. Hence, it has a least fixed point
    $h\colon A \to I$ with $\ini^{-1}\cdot h = Fh \cdot \alpha$ by
    Pataraia's \autoref{T:Pataraia}. This is a coalgebra homomorphism.

  \item Uniqueness. First notice that for $\A(I,I)$ we have an the
    analogous endomap $f$ given by $k \mapsto \ini \cdot Fk \cdot
    \ini^{-1}$. Since $I$ is initial, the only fixed point of $f$ is
    $k = \id_I$. Thus $\id_I = \mu f$. Now
    suppose that $h'\colon (A, \alpha) \to (I,\ini^{-1})$ is any
    coalgebra homomorphism. We know that
    $\A(h',I)\colon \A(I,I) \to \A(A,I)$ is a strict continuous map;
    strictness follows from left-strict\-ness of composition:
    $\bot_{I,I} \cdot h' = \bot_{A,I}$. We now show that
    $g \cdot \A(h',I) = \A(h',I) \cdot f$. Indeed, unfolding
    the definitions, we have for every $k\colon I \to I$:
    \begin{align*}
      g \cdot \A(h',I)(k)
      & =
      g(k \cdot h')
      =
      \ini \cdot F(k\cdot h') \cdot \alpha
      =
      \ini \cdot Fk \cdot Fh' \cdot \alpha
      =
      \ini \cdot Fk \cdot \ini^{-1} \cdot h' \\
      & =
      f(k) \cdot h' =  \A(h',I) (f(k)).
    \end{align*}    
    Therefore, by \autoref{L:mu-pres}, $\A(h',I)(\mu f) = \mu
    g$, which means that $h' = \id_I \cdot h' = h$.\qedhere%
  \end{enumerate}
\end{proof}

We leave as an open problem to find an endofunctor on $\DCPOb$ which
has a fixed point but not an initial algebra.

\section{The Initial-Algebra Chain}
\label{S:ini-chain}

The proof of \autoref{T:initial}, relying on Pataraia's
\autoref{T:Pataraia}, is constructive. However, if one admits
non-constructive reasoning and ordinals, then we can add another
equivalent characterization to \autoref{C:initial} in terms of the
convergence of the initial-algebra chain, which we now recall.
\begin{rem}\label{R:chain}\mbox{ }
  \begin{enumerate}
  \item Recall that an ordinal $i$ is the (linearly ordered) set of
    all ordinals smaller than $i$. As such it is also a category. 

  \item\label{R:chain:2} By an \emph{$i$-chain} in a category $\C$ is
    meant a functor $C\colon i \to \C$. It consists of objects $C_j$
    for all ordinals $j < i$ and (connecting) morphisms
    $c_{j,j'}\colon C_j \to C_{j'}$ for all pairs $j \leq j' < i$.
    Analogously, an \emph{$\Ord$-chain} in $\C$ is a functor from the
    totally ordered class $\Ord$ of all ordinals to $\C$.  In both
    cases we will speak of a (transfinite) \emph{chain}
    whenever confusion is unlikely.
    
  \item\label{R:chain:3} A category $\C$ \emph{has colimits of chains} if for every
    ordinal $i$ a colimit of every $i$-chain exists in $\C$. (This
    does \emph{not} include $\Ord$-chains.) In particular, $\C$ has an
    initial object since the ordinal $0$ is the empty set.

%
%
  \end{enumerate}
\end{rem}
\removeThmBraces
\begin{defn}[{\cite{A74}}]\label{D:chain}
  Let $\A$ be a category with colimits of chains. For an endofunctor
  $F$ we define the \emph{initial-algebra chain}
  $W\colon \Ord \to \A$. Its objects are denoted by $W_i$ and its
  connecting morphisms by $w_{ij}\colon W_i \to W_j$,
  $i \leq j \in \Ord$. They are defined by transfinite recursion as follows
  \[
    \begin{array}{@{}l}
      W_0 = 0,\quad
      \text{$W_{j+1} = FW_j$ for all ordinals $j$},\quad
      \text{$\textstyle W_j =  \colim_{i<j} W_i$ for all limit ordinals $j$},
      \\
      \text{$w_{0,1}\colon 0\to  W_1$ is unique}, \qquad
      w_{j+1,k+1} = F w_{j,k} \colon FW_j \to FW_k, \\
      \text{$w_{i,j}$ ($i<j$) is the colimit cocone for limit ordinals
        $j$}
    \end{array}
  \]
\end{defn}
\resetCurThmBraces
\begin{rem}\label{R:ini-chain}\mbox{ }
  \begin{enumerate}
  \item There exists, up to natural isomorphism, precisely one
    $\Ord$-chain satisfying the above equations. For example,
    $w_{\omega, \omega+1}\colon W_\omega \to FW_\omega$ is determined
    by the universal property of
    $W_\omega = \colim_{n< \omega} W_n = \colim_{n<\omega} W_{n+1}$ as
    the unique morphism with
    $w_{\omega,\omega+1} \cdot w_{n+1,\omega} = w_{n+1,\omega+1} =
    Fw_{n,\omega}$ for every $n < \omega$.
  \item\label{R:ini-chain:2}
    Every algebra $\alpha\colon FA \to A$ induces a
    canonical cocone $\alpha_i \colon W_i \to A$ ($i \in \Ord$) 
    on the initial-algebra chain; it is the unique cocone with
    \(
    \alpha_{i+1} = (W_{i+1} = FW_i \xra{F \alpha_i} FA \xra{\alpha} A)
    \)
    for all ordinals $i$. This is easy to see using transfinite induction.
  \end{enumerate}
\end{rem}
\begin{defn}\label{D-initial-chain}
  We say that the initial-algebra chain of a functor $F$
  \emph{converges in $\lambda$ steps} if $w_{\lambda,\lambda+1}$ is an
  isomorphism, and we simply say that it \emph{converges}, if it converges
  in $\lambda$ steps for some ordinal $\lambda$. 
\end{defn}

If $w_{i,i+1}$ is an isomorphism, then so is $w_{i, j}$, for all
$j > \lambda$. This is easy to prove by transfinite induction.

Convergence of the initial-algebra chain yields
an initial algebra~\cite{A74}. We obtain this as a consequence
of results from \autoref{S:recoalg} on recursive coalgebras: 
\begin{theorem}\label{T:initial-chain} 
  Let $\A$ be a category with colimits of chains. If the
  initial-algebra chain of an endofunctor $F$ converges in $\lambda$
  steps, then $W_\lambda$ is the initial algebra with the algebra
  structure
  \(
  w^{-1}_{\lambda,\lambda+1}\colon  FW_\lambda \to W_\lambda. 
  \)
\end{theorem}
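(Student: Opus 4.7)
The plan is to reduce the statement to the theory of recursive coalgebras developed in \autoref{S:recoalg}. Since convergence means $w_{\lambda,\lambda+1}\colon W_\lambda \to FW_\lambda$ is an isomorphism, the coalgebra $(W_\lambda, w_{\lambda,\lambda+1})$ is a fixed point. By \autoref{R:iso-ini} it therefore suffices to show that this coalgebra is recursive; then $(W_\lambda, w^{-1}_{\lambda,\lambda+1})$ will automatically be an initial algebra.

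To establish recursiveness I would prove, by transfinite induction on $i$, that $(W_i, w_{i,i+1})$ is a recursive coalgebra for every ordinal $i$. The base case $i=0$ is immediate from the initiality of $W_0 = 0$: the unique morphism from $0$ into any algebra trivially serves as the coalgebra-to-algebra morphism. The successor case follows directly from \autoref{P:recFappl}: if $(W_i, w_{i,i+1})$ is recursive, then so is $(FW_i, Fw_{i,i+1}) = (W_{i+1}, w_{i+1,i+2})$.

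The limit case is the only substantive step. For a limit ordinal $j$, I would first verify that each connecting morphism $w_{i,i'}$ (for $i \leq i' < j$) is a coalgebra homomorphism from $(W_i, w_{i,i+1})$ to $(W_{i'}, w_{i',i'+1})$; a short calculation reduces both sides of the homomorphism square to $w_{i,i'+1}$, using the defining equation $w_{j+1,k+1} = Fw_{j,k}$ together with functoriality of the chain. Thus the chain of coalgebras $((W_i, w_{i,i+1}))_{i<j}$ sits in $\Coalg F$, and since the forgetful functor $\Coalg F \to \A$ creates colimits, the colimit $W_j = \colim_{i<j} W_i$ from \autoref{D:chain} lifts uniquely to a colimit in $\Coalg F$; a further check identifies the inherited coalgebra structure with $w_{j,j+1}$, since both arise as the unique factorization of the cocone $(Fw_{i,j} \cdot w_{i,i+1})_{i<j}$ through $W_j$. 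Then \autoref{P:colimrec} yields recursiveness of $(W_j, w_{j,j+1})$, completing the induction. The main obstacle — really just careful bookkeeping rather than a conceptual difficulty — is precisely this last identification of the inherited coalgebra structure with the chain morphism $w_{j,j+1}$ supplied by \autoref{D:chain}.
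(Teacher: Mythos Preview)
Your proposal is correct and follows exactly the paper's approach: a transfinite induction showing each $(W_i,w_{i,i+1})$ is recursive (base via initiality of $0$, successor via \autoref{P:recFappl}, limit via \autoref{P:colimrec}), followed by an application of \autoref{R:iso-ini} once $w_{\lambda,\lambda+1}$ is invertible. The paper leaves the limit-step bookkeeping that you spell out (coalgebra-homomorphism check and identification of the induced structure with $w_{j,j+1}$) as an exercise, so your version is simply a more detailed execution of the same argument.
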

\begin{proof}
  An easy transfinite induction shows that every coalgebra
  $w_{i,i+1}\colon W_i \to FW_i$ is recursive: the
  coalgebra $0 \to F0$ is trivially recursive, for the isolated step
  use \autoref{P:recFappl}, and~\autoref{P:colimrec} yields the limit
  step. If $w_{\lambda,\lambda+1}$ is an isomorphism, then
  $(W_\lambda, w_{\lambda,\lambda+1}^{-1})$ is the initial algebra
  by~\autoref{R:iso-ini}.
\end{proof}
The existence of an $\M$-pre-fixed point implies that the
initial-algebra chain converges. The proof below is somewhat similar
to the proof of~\autoref{T:initial}. The difference is that one only
uses the recursive coalgebras $W_i\to FW_i$ in the initial-algebra
chain and applies Zermelo's \autoref{T:Zermelo} in lieu of Pataraia's
Theorem. For this we work again under \autoref{A:ass}.

\begin{theorem}\label{T:initial-2}
  Let $F$ preserve $\M$ and $m\colon FA \monoto A$ be an
  $\M$-pre-fixed point. If $A$ has \smooth $\M$-subobjects, then the
  initial-algebra chain for $F$ converges.
\end{theorem}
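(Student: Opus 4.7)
The plan is to connect the initial-algebra chain with the Zermelo iteration of the monotone map $f$ from~\eqref{eq:f} and then invoke Zermelo's~\autoref{T:Zermelo}. The key observation is that $\Sub_\M(A)$ is a dcpo with bottom by smoothness, hence chain-complete by Markowsky's Theorem, so Zermelo's iteration applies to $f$.

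I would build, by transfinite recursion, a family of $\M$-monomorphisms $u_i\colon W_i \monoto A$ compatible with the initial-algebra chain in the sense that $u_j \cdot w_{i,j} = u_i$ for all $i \leq j$, and simultaneously verify that $u_i$ represents $f^i(\bot)$ in $\Sub_\M(A)$. At stage~$0$, take $u_0\colon 0 \monoto A$, which lies in~$\M$ by smoothness (\autoref{R:smooth}). At a successor stage, set $u_{i+1} = m \cdot Fu_i$; this lies in~$\M$ because $F$ preserves~$\M$ and $m \in \M$, and it agrees with $f(u_i)$ by~\eqref{eq:f}. At a limit ordinal~$i$, the inductive data provide a directed diagram of monomorphisms $(W_j, w_{j,k})_{j \leq k < i}$ — the $w_{j,k}$ are monic because $u_k \cdot w_{j,k} = u_j$ with $u_j$ mono — together with a cocone $(u_j)_{j<i}$ of $\M$-monomorphisms. \autoref{P:smooth} then supplies the colimit $W_i$ with cocone $(w_{j,i}\colon W_j \to W_i)$ and the factorizing $\M$-monomorphism $u_i\colon W_i \monoto A$. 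By smoothness, $u_i$ represents $\bigvee_{j < i} u_j$, matching the recursion clause $f^i(\bot) = \bigvee_{j<i} f^j(\bot)$ from~\eqref{eq-Zerp}.

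By Zermelo's~\autoref{T:Zermelo}, there is an ordinal~$\lambda$ with $f^{\lambda+1}(\bot) = f^\lambda(\bot)$, so $u_\lambda$ and $u_{\lambda+1}$ represent the same $\M$-subobject of~$A$. Hence there is an isomorphism $t\colon W_\lambda \to W_{\lambda+1}$ with $u_{\lambda+1} \cdot t = u_\lambda$; combined with the compatibility $u_{\lambda+1} \cdot w_{\lambda,\lambda+1} = u_\lambda$ and the monicity of $u_{\lambda+1}$, this forces $w_{\lambda,\lambda+1} = t$, making $w_{\lambda,\lambda+1}$ an isomorphism. Thus the initial-algebra chain converges at step~$\lambda$. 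The main obstacle is the limit-step construction, where the existence of $W_i$ as a colimit, the monicity of its connecting morphisms, and the $\M$-membership of $u_i$ must all be established in tandem, which is exactly what smoothness (via~\autoref{P:smooth}) is designed to deliver.
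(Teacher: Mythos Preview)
Your proof is correct and follows the same approach as the paper: identify the canonical cocone $u_i\colon W_i\to A$ with the Zermelo iterates $f^i(\bot)$ by transfinite induction, then read off invertibility of $w_{\lambda,\lambda+1}$ from the stabilization $f^{\lambda+1}(\bot)=f^\lambda(\bot)$. The paper is terser, citing the cocone of \itemref{R:ini-chain}{R:ini-chain:2} directly and leaving the induction (including the limit step) implicit, whereas you spell out the limit step via \autoref{P:smooth}; one minor remark is that invoking Markowsky's Theorem is unnecessary, since every chain is directed and hence a dcpo is trivially chain-complete.
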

\begin{proof}
  Again, we use the monotone endomap
  $f\colon \Sub_\M(A) \to \Sub_\M(A)$
  in~\eqref{eq:f}. \autoref{T:Zermelo} applies since $\Sub_\M(A)$ is a
  dcpo by assumption, and therefore it is a chain-complete poset.
  Thus, $f$ has the least fixed point $\mu f = f^i(\bot)$ for some
  ordinal $i$.  The cocone $m_j\colon W_j \to A$ of
  \itemref{R:ini-chain}{R:ini-chain:2} satisfies $m_j = f^j(\bot)$
  for all $j \in \Ord$. This is easily verified by transfinite
  induction. Hence, from $f(f^i(\bot)) = f^i(\bot)$ we conclude that
  $m_i$ and~$m_{i+1}$ represent the same subobject of $A$. Since
  $m_i = m_{i+1} \cdot w_{i,i+1}$, it follows that $w_{i,i+1}$ is
  invertible, which means that the initial-algebra chain converges.
\end{proof}
We now obtain the original
initial-algebra theorem by Trnkov\'a et al.~\cite{takr}:
\begin{corollary}\label{C:initial-2}
  Let $\A$ be a category with colimits of chains and
  \smnote{The referee is right; either we add existence of colimits of
    chains or we should add `the initial algebra chain exists' in
    point~(1); I prefer the latter. Note that having a smooth class implies
    $\M$-well-poweredness (\itemref{R:smooth}{R:smooth:3}); so we
    need not assume it.}
  with a \smooth class $\M$ of monomorphisms. Then the following are
  equivalent for an endofunctor $F$ preserving $\M$:
  \begin{enumerate}
  \item\label{C:initial-2:1} the initial-algebra chain converges,
  \item\label{C:initial-2:2} an initial algebra exists,
  \item a fixed point exists,
  \item\label{C:initial-2:4} an $\M$-pre-fixed point exists.
  \end{enumerate}
  Moreover, if these hold, then $\mu F$ is an
  $\M$-subalgebra of every $\M$-pre-fixed point of $F$.
\end{corollary}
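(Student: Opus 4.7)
The plan is to close a cycle of implications \ref{C:initial-2:1} $\Rightarrow$ \ref{C:initial-2:2} $\Rightarrow$ (3) $\Rightarrow$ \ref{C:initial-2:4} $\Rightarrow$ \ref{C:initial-2:1}, by piecing together the results already established earlier in the paper, and then to deduce the final clause from \autoref{C:initial}. None of the individual implications requires fresh technology; the work is entirely in invoking the right previous result for each link.

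First I would note that \ref{C:initial-2:1} $\Rightarrow$ \ref{C:initial-2:2} is exactly \autoref{T:initial-chain}: if $w_{\lambda,\lambda+1}$ is invertible for some ordinal $\lambda$, then $(W_\lambda, w_{\lambda,\lambda+1}^{-1})$ is an initial algebra. The implication \ref{C:initial-2:2} $\Rightarrow$ (3) is Lambek's Lemma: the structure of an initial algebra is an isomorphism, so $\mu F$ is a fixed point. The implication (3) $\Rightarrow$ \ref{C:initial-2:4} is immediate from \autoref{A:ass}: a fixed point is an algebra whose structure is an isomorphism, and since $\M$ is assumed to contain all isomorphisms, this structure lies in $\M$.

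The implication \ref{C:initial-2:4} $\Rightarrow$ \ref{C:initial-2:1} is the substantive link, and here I would appeal directly to \autoref{T:initial-2}. Given an $\M$-pre-fixed point $m\colon FA\monoto A$, smoothness of $\M$ ensures in particular that $A$ has smooth $\M$-subobjects (\itemref{D:smooth}{D:smooth:2}), which is precisely the hypothesis of \autoref{T:initial-2}; so the initial-algebra chain converges. (Note that the existence of the initial-algebra chain is guaranteed by the assumption that $\A$ has colimits of chains, and by \itemref{R:smooth}{R:smooth:3} the smoothness of $\M$ gives $\M$-well-poweredness as a bonus.)

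Finally, for the ``moreover'' statement I would invoke the last sentence of \autoref{T:initial}, which furnishes, from any $\M$-pre-fixed point $(A,m)$, an initial algebra that is an $\M$-subalgebra of $(A,m)$. Since the initial algebra is unique up to isomorphism, $\mu F$ itself is an $\M$-subalgebra of every $\M$-pre-fixed point. I expect no real obstacle: the only subtle point is making sure that each cited theorem's hypotheses (colimits of chains for \autoref{T:initial-chain}, smoothness of $\M$-subobjects of the specific object $A$ for \autoref{T:initial-2} and \autoref{T:initial}) are covered by the global assumptions of the corollary, which they are.
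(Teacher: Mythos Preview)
Your proposal is correct and matches the paper's own argument essentially verbatim: the paper closes the same cycle, citing \autoref{T:initial-2} for \ref{C:initial-2:4}~$\Rightarrow$~\ref{C:initial-2:1}, \autoref{T:initial-chain} for \ref{C:initial-2:1}~$\Rightarrow$~\ref{C:initial-2:2}, and then defers the remaining implications and the ``moreover'' clause to \autoref{C:initial} (which in turn unpacks to Lambek's Lemma, the fact that $\M$ contains isomorphisms, and \autoref{T:initial}). Your only difference is cosmetic---you spell out those deferred steps explicitly rather than pointing back to \autoref{C:initial}.
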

Indeed, \ref{C:initial-2:4} implies
\ref{C:initial-2:1} by \autoref{T:initial-2}, and~\ref{C:initial-2:1}
implies~\ref{C:initial-2:2} is shown as in \autoref{T:initial-chain}. The remaining
implications are as for \autoref{C:initial}.
\begin{rem}
  Note that in lieu of assuming that $\A$ has colimits of \emph{all}
  chains, it suffices that the initial-algebra chain exists (i.e.~the
  colimits in \autoref{D:chain} exist). This weaker condition enables
  more applications, e.g.~the category of relations with $\M$ the
  class of injective maps and functors $F$ which are lifted from
  $\Set$.
\end{rem}
\begin{rem}\label{R:set}
  For a set functor $F$ no side condition is needed: if $F$ has a
  pre-fixed point, then it has an initial algebra. This is clear if
  $F\emptyset=\emptyset$. If not, there is a set functor $G$ with
  $G\emptyset \neq \emptyset$ which preserves monomorphisms and agrees
  with $F$ on all nonempty sets and maps~\cite{trnkova71}. Since every
  pre-fixed point of $F$ must be nonempty, it is also a pre-fixed
  point of~$G$. Hence~$G$ has an initial algebra, which clearly is an
  initial algebra for $F$, too.
\end{rem}
\begin{corollary}\label{C:converges}
  An endofunctor on one of the categories $\Set$, $\Pfn$, or $\KVec$
  has an initial algebra iff it has a pre-fixed point.
\end{corollary}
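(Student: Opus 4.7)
The plan is to dispatch $\Set$ separately via \autoref{R:set}, and then handle $\Pfn$ and $\KVec$ simultaneously via \autoref{C:initial-2}. In all three cases, the forward implication is immediate from Lambek's Lemma: an initial algebra has invertible structure, which is a monomorphism, so it is a pre-fixed point. For the converse direction on $\Set$, the statement of \autoref{R:set} is already exactly what is needed, with no further argument required.

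For $\Pfn$ and $\KVec$, I would apply \autoref{C:initial-2} with $\M$ the class of all monomorphisms. Two conditions must be checked: that $\M$ is smooth, and that every endofunctor preserves $\M$. For smoothness, I observe that both categories are lfp with a simple initial object, so \autoref{E:con} applies. In $\KVec$ the initial object is the zero space, and its outgoing morphisms are the zero maps, trivially injective; in $\Pfn$ the initial object is $\emptyset$, and the unique outgoing morphism to any set is the empty partial function, trivially monic. The local finite presentability of $\Pfn$ is most cleanly seen via the equivalence $\Pfn \simeq \Set_*$ with the category of pointed sets, which is a well-known lfp category.

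For the preservation condition, the key observation is that in both categories \emph{every} monomorphism splits, and split monomorphisms are preserved by every functor. In $\KVec$, an injective linear map $V \monoto W$ admits a retraction because $V$ has a complementary subspace in $W$ (via Zorn's Lemma on bases). In $\Pfn$, a quick test against morphisms out of a singleton shows that a monomorphism $f\colon A \to B$ must be a \emph{total} injective function (otherwise two partial functions into $A$ that differ only where $f$ is undefined would witness non-monicity); a retraction is then the partial function $g\colon B \rightharpoonup A$ sending $b \in f(A)$ to its unique preimage and being undefined elsewhere. Assembling smoothness and preservation, \autoref{C:initial-2} delivers the equivalence for $\Pfn$ and $\KVec$. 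The only mildly delicate point is the characterization of monomorphisms in $\Pfn$ as total injective functions; everything else reduces to quoting \autoref{E:con} and \autoref{C:initial-2}.
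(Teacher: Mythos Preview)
Your proposal is correct and follows essentially the same approach as the paper: dispatch $\Set$ via \autoref{R:set}, and handle $\Pfn$ and $\KVec$ by applying \autoref{C:initial-2} with $\M$ all monomorphisms, using that every monomorphism in these two categories is split and hence preserved by any functor. You have simply fleshed out details the paper leaves implicit (lfp-ness with simple initial object for smoothness via \autoref{E:con}, and the explicit characterization of monomorphisms in $\Pfn$ as total injections to justify the splitting).
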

\begin{proof}
  For $\Set$, use \autoref{R:set}. 
  For $\Pfn$ and $\KVec$, apply
  \autoref{C:initial-2} with~$\M$ the class of all monomorphisms
  (which are split and therefore preserved by every endofunctor).
\end{proof}

%
%
\bibliographystyle{plainurl}
\bibliography{refs}

%
%
\clearpage
\appendix
\section{Further Technical Details}

\subsection{Details for \autoref{E:cms}}

\begin{lemma}\label{L:MS-smooth}
  Monomorphisms are \smooth in $\MS$.
\end{lemma}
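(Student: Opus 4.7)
I would verify the criterion of \autoref{P:smooth}. Let $D$ be a directed diagram in $\MS$ consisting of monomorphisms (that is, injective non-expanding maps) with connecting morphisms $a_{ij}\colon A_i \monoto A_j$, and let $(m_i\colon A_i \monoto A)_{i\in D}$ be a cocone of monomorphisms. I need to produce a colimit of $D$ in $\MS$ and show that the canonical factorization of $(m_i)$ through this colimit is again a monomorphism; the bottom of $\Sub_\M(A)$ is then witnessed by the empty metric space (the initial object of $\MS$), whose unique map into $A$ is vacuously injective.

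First I would recall the construction of directed colimits in $\MS$. Take the set-theoretic colimit $C_0 = (\coprod_i A_i)/{\sim}$, where $x_i \sim y_j$ iff $a_{ik}(x_i) = a_{jk}(y_j)$ for some $k\geq i, j$, and equip it with the infimum pseudometric
\[
  d([x_i], [y_j]) = \inf_{k \geq i, j} d_{A_k}(a_{ik}(x_i), a_{jk}(y_j)),
\]
which is well-defined on equivalence classes because every $a_{kk'}$ is non-expanding. Quotient $C_0$ by the relation $d = 0$ to obtain a metric space $C$. The induced maps $c_i\colon A_i \to C$ are non-expanding, and a routine verification shows $(C, c_i)$ is the colimit in $\MS$: for any competing cocone $(f_i\colon A_i \to B)$, the set map $[x_i] \mapsto f_i(x_i)$ is non-expanding with respect to the pseudometric and hence descends uniquely to $C$.

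For the main step, suppose $h\colon C \to A$ is the factorization of $(m_i)$ and $h(\bar x) = h(\bar y)$ for $\bar x = [x_i]$, $\bar y = [y_j]$. Then $m_i(x_i) = m_j(y_j)$. Picking any $k\geq i, j$ in $D$, the cocone condition yields $m_k(a_{ik}(x_i)) = m_i(x_i) = m_j(y_j) = m_k(a_{jk}(y_j))$, and injectivity of $m_k$ forces $a_{ik}(x_i) = a_{jk}(y_j)$. Hence $x_i$ and $y_j$ already represent the same element of $C_0$, and therefore $\bar x = \bar y$ in $C$. Since monomorphisms in $\MS$ coincide with injective non-expanding maps, $h$ is a monomorphism, completing the verification of \autoref{P:smooth}. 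The main obstacle is purely bookkeeping: the directed colimit involves a two-stage construction (a set-theoretic identification followed by a pseudometric quotient), and one must track representatives across both stages. The key simplifying observation is that injectivity of the cocone morphisms forces the identifications already at the set-theoretic level, so the extra pseudometric quotient plays no role in the argument.
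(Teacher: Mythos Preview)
Your argument is correct and close in spirit to the paper's, but packaged differently. The paper does not invoke a general directed-colimit construction in $\MS$; instead it builds the join directly inside $A$ as the union $B=\bigcup_i m_i[A_i]$ equipped with the infimum metric $d'(x,y)=\inf\{d_i(x',y'):m_i(x')=x,\ m_i(y')=y\}$, and then checks by hand that $d'$ is a metric (the triangle inequality is the main work), that the inclusion $B\hookrightarrow A$ is non-expanding, and that the restricted maps $A_i\to B$ form a colimit cocone. Your route via \autoref{P:smooth} and the abstract colimit $C$ amounts to the same thing once one notes that the injective cocone $(m_i)$ identifies $C_0$ bijectively with $B$ and the infimum pseudometric with $d'$; your closing remark that the pseudometric quotient is inert is precisely this identification. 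The trade-off is that the paper's version is self-contained (the triangle-inequality computation is written out), whereas you defer that step as a ``routine verification'' of the colimit in $\MS$; this is acceptable since the construction is standard, but be aware that the triangle inequality for the infimum pseudometric is exactly the technical content the paper chose to display.
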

\begin{proof}
  Fix a space $(A,d)$, and consider a directed set $D$ of subobjects
  $m_i\colon (A_i,d_i) \monoto (A,d)$ ($i \in D$) with monomorphisms
  $a_{i,j}\colon A_i \monoto A_j$ witnessing $i \leq j$ in $D$. Let
  $\Abar = \bigcup_{i \in D} m_i[A_i]$, and let
  $\dbar\colon\Abar\to[0,1]$ be defined as follows:
  \begin{equation}\label{mssmooth}
    \dbar(x,y) = \inf \set{d_i(x',y') : \text{$i \in D,
        x',y'\in A_i$, $m_i(x') = x$ and $m_i(y') = y$}}.
  \end{equation}
  We show that $\dbar$ is a metric. It is clearly symmetric and
  fulfils $D'(x,x) = 0$. We verify that distinct points $x, y$ in
  $\Abar$ have non-zero distance.  For each $i$, $x'$, and $y'$ as in
  \eqref{mssmooth}, $d_{i}(x',y') \geq d(x,y)$, since $m_i$ is
  non-expanding. Thus $\dbar(x,y) \geq d(x,y) > 0$.
  
  Finally, we verify that $\dbar$ satisfies the triangle
  inequality. To this end it suffices to show that for all $x, y, z
  \in \Abar$ and every $\eps > 0$ we have $\dbar(x,z) \leq \dbar(x,y)
  + \dbar(y,z) + \eps$.
  
  Let $x,y,z\in\Abar$ and fix $\varepsilon > 0$. We can choose $i$,
  $x'$, $y'$ as in \eqref{mssmooth} such that
  $d_{i}(x',y') < \dbar(x,y) + \varepsilon/2$. Analogously, let $j$,
  $y''$, $z''$ be such that
  $d_{j}(y'',z'') < \dbar(x,y) + \varepsilon/2$.  Since the collection
  $m_i[A_i]$ is directed, we can assume $i \leq j$ in $D$. Using that
  the connecting map $a_{i,j}$ is non-expanding we obtain
  $d_{j}(a_{i,j}(x'),a_{i,j}(y')) < \dbar(x,y) + \varepsilon/2$.
  Since $m_j$ is injective, $a_{i,j}(y') = y''$.  Let
  $x'' = a_{i,j}(x')$, and note that $m_j(x'') = x$.  By the triangle
  inequality in $A_j$,
  \[
    d_{j}(x'', z'') \leq d_{j}(x'', y'') + d_{j}(y'', z'')
    <
    \dbar(x,y) + \dbar(y,z) + \varepsilon.
  \]
  It follows that
  $\dbar(x,z) \leq \dbar(x,y) + \dbar(y,z) + \varepsilon$, as
  desired.
  
  It is obvious that the inclusion $m\colon \Abar\monoto A$ is
  non-expanding. It is also easy to check that $(\Abar,\dbar)$ is the
  join in $\Sub(A,d)$ of the directed diagram corresponding to given
  directed set $D$.

  Finally, for every $i \in D$, we have the codomain restriction
  $m_i'\colon A_i\monoto \Abar$ of $m_i$, which is non-expanding. We
  verify that the family of all $m'_i$ ($i \in D$) forms a colimit
  cocone. It clearly is a cocone. Consider any cocone
  $f_i\colon(A_i,d_i)\to (A^*,d^*)$, $i \in D$. Clearly, the union $B$
  is the colimit in $\Set$. Therefore, we have a unique map
  $f\colon \Abar \to A^*$ such that $f_i = f \cdot m_i'$ for all
  $i \in D$. This is given by $f(x) = f_i(x)$ whenever
  $x\in m_i[A_i]$. We check that $f$ is non-expanding, and this will
  conclude our verification. Let $x,y\in \Abar$, and choose $i,x', y'$
  as in~\eqref{mssmooth}. Since $f_i$ is non-expanding,
  \[
    d^*(f(x),f(y))
    =
    d^*(f_i(x'), f_i(y'))
    \leq
    d_i(x',y')
    \leq
    \dbar(x,y).\tag*{\qedhere}
  \]
\end{proof}
\begin{lemma}\label{L:CMS-smooth}
  Strong monomorphisms are smooth in $\CMS$.
\end{lemma}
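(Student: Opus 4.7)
The plan is to verify the two conditions of \autoref{P:smooth} for the class $\M$ of strong monomorphisms in $\CMS$. I would begin by recalling that the strong monomorphisms of $\CMS$ are precisely the isometric embeddings with closed image: every non-expanding $f\colon X\to Y$ between complete metric spaces factors as $X \twoheadrightarrow \overline{f[X]} \subto Y$, an (epi, strong-mono)-factorization in $\CMS$, since dense-image maps are the epimorphisms of $\CMS$ and closed subspaces of complete spaces are themselves complete.

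Fix $(A,d)\in\CMS$ and consider a directed diagram $D$ of monomorphisms in $\CMS$ together with a cocone $m_i\colon A_i\monoto A$ of strong monomorphisms. Since each $m_i$ is an isometric embedding and $m_i=m_j\cdot a_{i,j}$ for $i\leq j$, a one-line squeeze using non-expansiveness of $a_{i,j}$ forces every connecting map $a_{i,j}$ to be an isometric embedding as well. Consequently $D$ is effectively a directed system of isometric embeddings, the infimum in~\eqref{mssmooth} collapses to the distance inherited from $A$, and \autoref{L:MS-smooth} identifies the $\MS$-colimit of $D$ with the subspace $\bar A = \bigcup_{i\in D} m_i[A_i]\subseteq A$ equipped with the restricted metric.

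To lift this to $\CMS$, I would invoke that the Cauchy completion $\MS\to\CMS$ is left adjoint to the inclusion $\CMS\subto\MS$ and therefore preserves colimits. The colimit of $D$ in $\CMS$ thus exists and equals the completion $\widehat{\bar A}$, while the universal factorizing morphism $m\colon \widehat{\bar A}\to A$ is the unique non-expanding extension of the inclusion $\bar A\subto A$, which is available because $A$ is complete. This extension $m$ is itself an isometric embedding, and its image is the closure $\overline{\bar A}$ in $A$, which is closed. So $m$ is a strong monomorphism, and both clauses of \autoref{P:smooth} are verified.

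The main obstacle is assembling the two prerequisite facts---the characterization of strong monos and epis in $\CMS$, and the reflective nature of $\CMS$ inside $\MS$---in a form the argument can cite. Neither is deep, but both should be stated explicitly at the head of the proof. Once they are in hand, the core of the argument is simply the observation that the isometry of the cocone forces the connecting maps to be isometric, after which one recycles \autoref{L:MS-smooth} and then applies completion.
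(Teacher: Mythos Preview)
Your proposal is correct and follows essentially the same route as the paper: reduce to \autoref{L:MS-smooth} to identify the $\MS$-colimit with the union subspace, then use that $\CMS$ is reflective in $\MS$ (with Cauchy completion as reflector) to obtain the $\CMS$-colimit as the closure of that union inside $A$. The paper's version is terser---it works directly with closed subspaces of $A$ rather than invoking \autoref{P:smooth}---but the substance is identical; your added observation that the connecting maps are forced to be isometric (so the infimum metric of \autoref{L:MS-smooth} collapses to the metric inherited from $A$) makes the bridge to \autoref{L:MS-smooth} more explicit than the paper does.
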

\begin{proof}
  Fix a complete metric space $(A,d)$, and consider a directed set $D$
  of closed subspaces $A_i \subto A$. Their join $B \subto A$ is the
  closure of their union
  \[
     B = \overline{\bigcup_{i < \lambda} A_i}.
   \]
   We know from \autoref{L:MS-smooth} that the union is the colimit of
   the directed diagram corresponding to $D$ in $\MS$. Moreover, the
   colimit of a diagram in $\CMS$ is given by forming the Cauchy
   completion of the colimit of that diagram in $\MS$. (This follows
   from the fact that $\CMS$ is a reflective subcategory of $\MS$ with
   Cauchy completions as reflections.) Since $B$ is complete and
   $\bigcup_{i \in D} A_i$ is dense in it, $B$ is the Cauchy
   completion of that union, whence it is desired colimit in $\CMS$.
\end{proof}

\subsection{Proof of \autoref{Basic-Lemma-D}}
\begin{proof}
  \ref{Basic-Lemma:1D}~$\Rightarrow$ \ref{Basic-Lemma:2D}:
  Let $D$ have objects $D_i$ and connecting morphisms $e_{i,j}\colon
  D_i \to D_j$. Write~$\ehat_{i,j}$ for the projection of
  $e_{i,j}$.  We verify that for $i \leq j \leq k$,
  $\ehat_{i,k} = \ehat_{i,j} \cdot e_{j,k}$.  In fact, $\ehat_{i,j} $ is
  unique with $\ehat_{i,j} \cdot e_{i,j} = \id_{D_i}$ and
  $ e_{i,j} \cdot \ehat_{i,j} \sqsubseteq \id_{D_j}$.  But
  $\ehat_{i,k} \cdot e_{j,k}$ also has these properties, since
  \[
    \begin{aligned}
      (\ehat_{i,k} \cdot e_{j,k}) \cdot e_{i,j}
      & = \ehat_{i,k} \cdot e_{i,k} \\
      & = \id_{D_i},
    \end{aligned}
    \quad\qquad\text{and}\quad\qquad
    \begin{aligned}
      e_{i,j}\cdot  (\ehat_{i,k} \cdot e_{j,k})
      & = e_{i,j} \cdot \ehat_{i,j} \cdot \ehat_{j,k} \cdot e_{j,k}\\
      & = e_{i,j} \cdot \ehat_{i,j} \cdot \id_{D_k}\\
      & \sqsubseteq \id_{D_k}.
    \end{aligned}
  \]
  This shows that indeed $\ehat_{i,k} = \ehat_{i,j} \cdot e_{j,k}$ for
  $i\leq j\leq k$.
  
  For each $i$ form the subdiagram $D^{i}$ of all $D_j$ for $j\geq i$,
  with connecting maps $e_{j,k}$ for $i\leq j \leq k$ inherited from
  $D$.  Since $I$ is directed, the colimit of $D^{i}$ is
  $(c_j)_{j\geq i}$.  Our observation at the outset shows that we have
  a cocone of $D^i$:
  \[
    \takeout{
    \begin{tikzcd}[column sep = 15]
      D_j \ar{rr}{e_{j,k}}
      \ar{rd}[swap]{\ehat_{i,j}}
      &&
      D_k
      \ar{ld}{\ehat_{i,k}}
      \\
      &
      D_i
    \end{tikzcd}
  }
  \ehat_{i,j} = \big(D_j \xra{e_{j,k}} D_k \xra{\hat e_{i,k}} D_i\big).
  \]
  Thus, there is a unique factorization $\widehat{c}_i\colon C\to D_i$
  through the colimit cocone:
  \begin{equation}\label{DBL1}
    \ehat_{i,j} = \widehat{c}_{i} \cdot c_j  \quad \text{for $j\geq i$}.
  \end{equation}
  In particular, for $i = j$, we see that
  $\widehat{c}_i \cdot c_i = \id_{D_i}$.  We will verify below the
  equation $\bigsqcup_j c_j \cdot \chat_j = \id_C$,
  and this of course implies that
  $c_i \cdot \widehat{c}_i \sqsubseteq \id_C$.  (This justifies our use of
  the projection notation~$\widehat{c}_j$ and shows the first point in 
  \autoref{Basic-Lemma:2D}, that $c_i$ is an embedding.)
  
  Next, we show that 
  for each $j$, the morphisms $\chat_i$ for $i\leq j$ form a cone of $D^j$:
  \[
    \takeout{
    \begin{tikzcd}[column sep = 15]
      & C
      \ar{ld}[swap]{\chat_j}\ar{rd}{\chat_i}
      \\
      D_j \ar{rr}{\ehat_{i,j}} & & D_i  
    \end{tikzcd}
    }
    \chat_i = \big(C \xra{\hat c_j} D_j \xra{\hat e_{i,j}} D_i\big).
  \] 
  Indeed, the colimit cocone $(c_k)_{k\geq j}$ is collectively epic, so
  we need only establish this after precomposing with each $c_k$. We
  apply \eqref{DBL1} twice to obtain:
  \(
    \chat_{i} \cdot c_{k}
    =
    \ehat_{i,k}
    =
    \ehat_{i,j} \cdot \ehat_{j,k}
    =
    \ehat_{i,j} \cdot  \chat_{j} \cdot c_k.
  \)
  
  We are ready to argue for~\ref{Basic-Lemma:2D}.
  The maps $c_i\cdot \chat_i$ form a directed subset of $\A(C,C)$
  because for $i \leq j$,%
  \(
    c_i\cdot \chat_i
    =
    (c_j\cdot e_{i,j}) \cdot (\ehat_{i,j} \cdot \chat_j)
    \sqsubseteq
    c_j\cdot \chat_j.
  \)
  Thus, $\bigsqcup_{j} c_j\cdot \widehat{c}_j$ exists.
  We use that the family~$(c_i)$ is collectively epic, and verify that
  \(
    \textstyle
    \bigsqcup_j c_j \cdot \chat_j\cdot c_i = c_i \text{ for every $i$}.
  \)
  Fix $i$, and consider the join above.  Since it is over a directed
  set, we need only consider
  $\bigsqcup_{j\geq i} c_j \cdot \chat_j\cdot c_i $. In addition, for
  $j\geq i$ we obtain
  \(
  c_j \cdot \chat_j \cdot c_i
  =
  c_j \cdot \chat_j\cdot (c_j \cdot e_{i,j})
  =
  c_j \cdot  e_{i,j}
  =
  c_i.
  \)
  
  \smallskip
  \noindent
  \ref{Basic-Lemma:2D}~$\Rightarrow$~\ref{Basic-Lemma:1D}: Let
  $(b_i\colon D_i\to B)$ be a cocone.  For all $i \leq j$ we have
  $b_i = b_j \cdot e_{i,j}$.  We also have a cocone $(c_i)$, and so
  $c_i = c_j \cdot e_{i,j}$, and thus $\chat_i = \ehat_{i,j} \cdot \chat_j$.
  From this we have
  \[
    b_i\cdot \chat_i
    =
    (b_j \cdot e_{i,j}) \cdot  (\ehat_{i,j} \cdot \chat_j)
    \sqsubseteq
    b_j \cdot e_j.
  \] 
  Thus, the following join exists in $\A(C,B)$:
  \(
    \textstyle
    b = \bigsqcup_{j} b_j \cdot \chat_j.
  \)
  To prove that $b_i = b \cdot c_i$ for all $i$, we fix one $i$ and
  consider the join above with $j\geq i$:
  \[
    b_j \cdot \chat_j \cdot c_i
    =
    (b_j \cdot \chat_j )\cdot ( c_j \cdot e_{i,j})
    =
    b_j \cdot e_{i,j}
    =
    b_i.
  \]
  Thus
  \( b \cdot c_i = \bigsqcup_{j\geq i} (b_j\cdot \chat_j \cdot c_i) =
  \bigsqcup_{j\geq i} b_i = b_i.\) This shows that $b$ is the desired
  factorization of $(b_i)$. For its uniqueness, let $b'\colon C\to B$
  be a morphism with $b' \cdot c_i = b_i$ for all $i$.  Since
  $\bigsqcup_{i} c_i \cdot \chat_i = \id_C$, we have
  \(\textstyle b' = b' \cdot \big( \bigsqcup_{i} c_i \cdot
  \chat_i\big) = \bigsqcup_{i} b' \cdot c_i \cdot \chat_i =
  \bigsqcup_{i} b_i \cdot \chat_i = b.  \) This completes the proof.
\end{proof}

\subsection{Proof of~\autoref{T:dcpos-smooth}}
\label{S:dcpos-smooth}

\begin{proof}
We use \autoref{P:smooth}.   Fix an object $A$ in $\A$.
Let $\E$ be the class of embeddings, so that 
 $\Sub_{\E}(A)$ denotes the  poset of subobjects of $A$ represented by embeddings.    
Let $D$ be a directed diagram of monomorphisms in $\A$,  not necessarily embeddings,
and let 
  $m_i\colon A_i \monoto A$, $i\in D$, be a cocone of morphisms in $\Sub_{\E}(A)$.
By hypothesis, $D$ has a colimit cocone, say  $c_i\colon A_i \to B$.
We have a unique  morphism $m\colon B \to A$ such that for all $i$, 
$m_i =m \o c_i $.
Our task is to show that $m$ is an embedding,
and that $m = \bigsqcup_{i\in A} m_i$ in $\Sub_{\E}(A)$.

For $i\leq j$ in $A$, we have a morphism $e_{i,j}$ such that
$m_i = m_j \cdot e_{i,j}$.  Let us verify that $e_{i,j}$ is an
embedding and that $\ehat_{i,j}= \mhat_i\cdot m_j$.  To see this, we
use the characterization of projections. First,
$ (\mhat_i\cdot m_j) \cdot e_{i,j} = \mhat_i\cdot m_i = \id$. Second,
we verify $e_{i,j} \cdot (\mhat_i \cdot m_j) \sqsubseteq \id$:
\begin{align*}
  e_{i,j} \cdot  (\mhat_i\cdot m_j)
  & = (\mhat_j\cdot m_j)\cdot  e_{i,j} \cdot  \mhat_i\cdot m_j
  & \text{since  $\mhat_j\cdot m_j = \id$}\\
  & = \mhat_j\cdot m_i \cdot \mhat_i \cdot m_j
  &
  \text{since $m_i = e_{i,j}\o m_j$}
  \\
  & \sqsubseteq  \mhat_j\cdot   m_j
  & \text{since  $m_i \cdot \mhat_i= \id$}\\ 
  & = \id.
\end{align*}
We next show that for $i \leq j$, $c_i\o \mhat_i \sqsubseteq c_j \o \mhat_j$.
Once this is done, we put $\mhat = \bigsqcup_i c_i\o\mhat_i$ and show that it is a projection for $m$.
We thus calculate:
\begin{align*}
  c_i\o \mhat_i
  & = c_j \o e_{i,j} \o \mhat_i \\
  & = c_j \o e_{i,j}  \o \!\!\widehat{~m_j \cdot e_{i,j}}\\
  & = c_j  \o e_{i,j} \o \ehat_{i,j}\o \mhat_j \\
  & \sqsubseteq c_j \o \mhat_j
\end{align*}
To prove that $\mhat\o m = \id$, we use that the family $(c_i)$ is collectively epic.
Thus, we show that for all $i$, $\mhat\o m\o c_i  = c_i$.
We again consider $\bigsqcup_{j\geq i}  c_j$ only: 
\begin{align*}
  \textstyle
  \mhat\o m\o c_i
  &\textstyle=  \big(\bigsqcup_{j\geq i}  c_j\o\mhat_j\big) \o m   \o c_i\\
  &\textstyle= \bigsqcup_{j\geq i} c_j\o\mhat_j \o m_i \\
  &\textstyle= \bigsqcup_{j\geq i} c_j\o\mhat_j \o m_j \o e_{i,j}\\
  &\textstyle=\bigsqcup_{j\geq i} c_j \o e_{i,j} \\
  &\textstyle= \bigsqcup_{j\geq i} c_i \\
  &=c_i.
\end{align*}
In the other direction, we show that $m\o \mhat \sqsubseteq \id$:
\[\textstyle
  m \o \big( \bigsqcup_i c_i\o\mhat_i\big)
  = 
  \bigsqcup_i m\o c_i \o \mhat_i
  =
  \bigsqcup_i m_i \o \mhat_i \sqsubseteq \id.
\]
Our last order of business is to show that 
$m =\bigsqcup_{i\in A} m_i$ in $\Sub_{\E}(A)$.
Since $m \o c_i = m_i$, we see that $m_i \sqsubseteq m$ for all $i$.
Let $u\colon U\monoto A$ be an embedding with  $m_i \sqsubseteq u$  for all $i$.
Thus we have morphisms $u_i$ such that  $m_i = u \o u_i$.
The family $(u_i)_i$ is a cocone of the original diagram $D$, because if $i\leq j$, then
\[
  u_j \o e_{i,j}
  =
  (\widehat{u}\o u) \o u_j \o e_{i,j}
  =
  \widehat{u} \o m_j \o e_{i,j}
  =
  \widehat{u} \o m_i
  =
  \widehat{u} \o  u \o u_i = u_i.
\]
Since $(c_i)$ is a colimit, there is a unique $f\colon M \to U$ such
that $u_i = f \o c_i$ for all $i$.  We aim to show that $m = u \o f$,
so that $m \sqsubseteq u$.  For this, we again use the fact that
$(c_i)$ is a collectively epic family:
$ m \o c_i = m_i = u \o u_i = u \o f \o c_i $.
\end{proof}

\end{document}